\newtcolorbox{mybox2}[2][]{%
  attach boxed title to top center
               = {yshift=-8pt},
  fonttitle    = \bfseries,
  title        = #2,#1,
  enhanced,
}
\title{The Cost of Sybils, Credible Commitments, and False-Name Proof Mechanisms} 
\author{Bruno Mazorra Roig}{Universtitat Pompeu Fabra, Spain}{brunomazorra@gmail.com}{https://orcid.org/0000-0003-0779-0765}{}
\newcommand{\1}{\textbf{1}^T}
\newcommand{\mapname}{\textbf{AnPM}}
\author{Nicol\'as Della Penna}{Amurado Research}{nikete@gmail.com}{}{}
\authorrunning{B. Mazorra and N. Della Penna} 
\keywords{Game theory, Mechanism design, Sybil attacks, False-name proof} 
\begin{document}

\maketitle

\begin{abstract}
Consider a mechanism that cannot observe how many players there are directly, but instead must rely on their self-reports to know how many are participating. Suppose the players can create new identities to report to the auctioneer at some cost $c$.  The usual mechanism design paradigm is equivalent to implicitly assuming that $c$ is infinity for all players, while the usual Sybil attacks literature is that it is zero or finite for one player (the attacker) and infinity for everyone else (the 'honest' players). The false-name proof literature largely assumes the cost to be 0. We consider a model with variable costs that unifies these disparate streams. 

A paradigmatic normal form game can be extended into a Sybil game by having the action space by the product of the feasible set of identities to create action where each player chooses how many players to present as in the game and their actions in the original normal form game. A mechanism is (dominant) false-name proof if it is (dominant) incentive-compatible for all the players to self-report as at most one identity. We study mechanisms proposed in the literature motivated by settings where anonymity and self-identification are the norms, and show conditions under which they are not Sybil-proof. We characterize a class of dominant Sybil-proof mechanisms for reward sharing and show that they achieve the efficiency upper bound. We consider the extension when agents can credibly commit to the strategy of their sybils and show how this can break mechanisms that would otherwise be false-name proof.

\end{abstract}

\section{Introduction}
The internet naturally motivates the study of mechanisms where players can use fake names and bids to improve their outcomes. Two literatures, largely separate, have studied this. Using the nomenclature of "Sybil attacks" and focusing on single-player deviations from a generally truthful population \cite{levine2006survey,muller2008sybil,yu2009dsybil,lin2017sybil,chen2019axiomatic}. The other, using the nomenclature of false-name proofness, has focused on the equilibrium of games where players who can miss-represent as multiple ones \cite{yokoo2001robust, suyama2005strategy,todo2011false, alkalay2014false,fioravanti2022false}. In this work, we jointly formulate both streams of the literature and extend the analysis to situations with the potential for commitment. Since the stream of the literature using the term Sybil appears largely unaware of the equilibrium implications of the phenomenon, we use the Sybil term while focusing on equilibrium to help in correcting this. 

The necessary and sufficient condition for a mechanism to be Sybil-proof (equivalently, false-name proof), meaning that players have no incentives to generate Sybils, is that their payoff is no better as they add extra identities is no more than their payoff when they have a single identity.  While several mechanisms have been known not to be Sybil-proof when there is no cost to creating new identities (notably VCG, \cite{yokoo2001robust,sher2012optimal,alkalay2014false}), we generalise the condition to potential costs in the creation of sybils. Motivated by smart contracts, we consider the possibility of credible commitment in the strategies of the Sybil players as a natural extension.  Under these conditions, we demonstrate that many previously studied mechanisms in the literature are not Sybil-proof (\cite{brams1995envy,mcafee1992bidding,mossel2010truthful}), and provide examples of mechanisms that are. 
These are constructed using the \textit{pie shrinking with crowding}. This consists of shrinking the total welfare as more players present themselves, to make the mechanisms Sybil-proof. In this paper, we use this to construct Sybil and truthful for fair-division cake-cutting mechanisms and Sybil-proof bidding rings in second-price auction.

\subsection{Simple Example}\label{subsection:example}
In many different fields, the Sybil game emerges as a bigger game. One example is when a reward, $R$, is distributed among participants, and the chances of winning the reward are based on the number of identities participating. In this situation, a strategic actor may attempt to create multiple identities to increase their chances of receiving the reward, and so, increase its expected payoff.
More formally, if the expected payoff of any strategic player is $U(x,y) = Rx/(x+y)$ where $x$ is the number of identities reported by the player and $y$ is the number of identities reported by other players. Observe that $U$ is strictly increasing on $x$ and therefore, there is no optimal strategy. Similarly, if all players are strategic, the game has no Nash equilibrium. However, in general, creating fake identities induce some cost to the attacker. If creating an identity has some associated costs $c>0$, then the expected payoff of an attacker creating $x$ identities is $U(x,y) = Rx/(x+y)-cx$. In general, $x=1$ is not a dominant strategy. For example, assume that $R=10$, $c=0.1$, and $y=3$. Then, $U(1,3) = 10/4-0.1=2.4$ and $U(2,3) = 10\cdot 2/5 -0.2=3.8$. And so, players have incentives to report more than one identity. 
\subsection{Related Literature}
In recent years, the security of distributed systems has become increasingly important as more and more of our daily lives are conducted online. One type of attack that has received a lot of attention is the Sybil attack, first identified by John Douceur \cite{douceur2002sybil}. A Sybil attack is a type of attack in which a single malicious entity creates multiple fake identities in order to manipulate the system and gain an unfair advantage.

Sybil attacks have been studied in a variety of contexts, including peer-to-peer networks \cite{dinger2006defending,so2011defending}, online social networks \cite{yu2006sybilguard,yu2008sybillimit}, reputation systems \cite{cheng2006manipulability} blockchain systems \cite{zhang2019double}, combinatorial auctions \cite{yokoo2004effect,alkalay2014false}, and diffusion auctions \cite{chen2022sybil}. In a peer-to-peer network, a Sybil attacker may create multiple fake identities in order to control a large portion of the network and launch a denial-of-service attack. In permissionless anonymous environments, a Sybil attacker may create multiple fake identities in order to spread misinformation or influence public opinion. In a blockchain system, a Sybil attacker may create multiple fake identities in order to control a large portion of the network and carry out a 51\% attack. 
In combinatorial and diffusion auctions, a Sybil attacker may create multiple false identities and bid in different bundles to manipulate the outcome of the auction and increase his gains \cite{yokoo2004effect}.

To protect against Sybil attacks, researchers have proposed a variety of mechanisms, including unique identifier systems \cite{maram2021candid,hashmi2008authentication,sanchez2019zero}, proof-of-work/proof-of-stake systems \cite{baza2020detecting,nakamoto2008bitcoin,king2012ppcoin}, and reputation systems \cite{yu2006sybilguard,kamvar2003eigentrust}. However, these mechanisms are not foolproof, and Sybil attacks can still occur in practice.

In game theory and auction theory, Sybil attacks are usually noted as false-name strategies or shill bids. The first author studying false-name strategies in internet auctions is made in \cite{yokoo2001robust,yokoo2004effect}.
In \cite{yokoo2001robust} the authors present a combinatorial auction protocol that is robust against false-name bids. In \cite{yokoo2004effect}, M. Yokoo et. al. prove that Vickrey–Clarke–Groves (VCG) mechanism, which is strategy-proof and Pareto efficient when there exists no false-name bid, is not false name-proof/Sybil-proof and there exists no false-name proof combinatorial auction protocol that satisfies Pareto efficiency. 
In \cite{iwasaki2010worst}, the authors analyzed the worst-case efficiency ratio of false-name-proof combinatorial auction mechanisms. The authors show that the worst-case efficiency ratio of any false-name-proof mechanism that satisfies some minor assumptions is at most $2/(m+1)$ for auctions with $m$ different goods.
In cite \cite{sher2012optimal}, the author formulates the problem of optimal shill bidding for a bidder who knows the aggregate bid of her opponents. The author proves that if the mechanism is required to be Strategy-proof and false-name proof, then VCG only works with additive valuations.

In the field of non-monetary mechanisms in \cite{todo2011false} the authors study false-name-proof mechanisms in the facility location problem. First, the authors fully characterize the deterministic false-name-proof facility location mechanisms in this basic setting. By utilizing this characterization, they show the tight bounds of the approximation ratios of social cost and maximum cost.

In voting false-name proof mechanisms, different studies have been made \cite{wagman2008optimal,fioravanti2022false}. In \cite{wagman2008optimal}, the authors study voting rules where there is a cost for casting a vote. The authors characterize the optimal (most responsive) false-name-proof with-costs voting rule for 2 alternatives.  They prove that as the voting population grows larger, the probability that this rule selects the majority winner converges to $1$. Also, the authors characterize the optimal group false-name-proof rule for two alternatives.
In \cite{fioravanti2022false} the authors characterize all voting rules that verify false-name-proofness, strategy-proofness, unanimity, anonymity, and neutrality as either the class of voting by quota one (all voters can be decisive for all objects) or the class of voting by full quota (all voters can veto all objects).

To our knowledge, no previous work has been made in Sybil-proof collusion mechanisms behaviour in auctions (and bidding rings), and general fair allocation mechanisms such as cake-cutting with homogenous and heterogeneous valuations.

\subsection{Our contributions}
 The main contributions of this paper can be summarized as follows:
\begin{itemize}
    \item We generalize the games where players can represent more than one identity to the underlying game with some cost that depends on the number of identities, which we call the Sybil extension game. In this game, players can create and utilize Sybils to compete in the underlying normal game. Specifically, we provide a mathematical framework for such games based on the costs associated with creating Sybils and the interactions between a player's Sybils and those of other players. 
    \item We identify a necessary and sufficient condition for a mechanism to be Sybil-proof. That is a necessary condition of a mechanism where players have no incentives to generate Sybils. We use this to demonstrate that many previously proposed mechanisms in the literature do not naturally satisfy this condition, and are therefore not Sybil-proof. 
    \item We analyze collusive behavior in permissionless environments where players want to share some reward $R$ but can create false identities. We call these Reward Distribution Mechanisms (RDM). We find the necessary conditions for an RDM to be a Sybil-proof mechanism and introduce pie shrinking with crowding. Doing so, we find the symmetric, prior-free, budget-independent, and Pareto optimal RDM. We also prove that if players share some knowledge of the number of players, other mechanisms have strictly greater social welfare in equilibrium can be found. More generally, we study Sybil-proof of cake-cutting mechanisms, proving the upper bound of efficiency of Sybil-proof cake-cutting mechanisms and giving a worst-case welfare optimal truthful cake-cutting mechanism.
    \item We also study a more general framework, the collusive behavior in permissionless systems with an unknown number of players participating in bidding rings. We specifically study collusive behavior in the second-price auction, where players can submit false name bids and all players have the same distribution of valuations. In this scenario, we prove there is no efficient Sybil-resistant bidding ring, but there are constructive optimal Sybil-resistant collusion mechanisms.

    \item Motivated by permissionless credible commitment devices, we define Sybil-commitment games. These arise when players can credibly commit to certain strategies. This can be seen as an extended game of a normal game with two phases. The Sybil-commitment phase where players commit a set of rational independent identities with different preferences, and the normal game phase where the Sybils in the first phase have full information about the game. We defined the Sybil-commitment equilibrium and the Sybil-commitment-proof. We proved necessary conditions on the price of anarchy of the underlying game to be Sybil-proof. Finally, we give examples of games that are Sybil-proof but not Sybil-commitment-proof.
\end{itemize}

 \subsection{Organization of the paper}
In this paper, we present a comprehensive study of Sybil-proof mechanisms in mechanism design. The paper is organized as follows:

Section 2 introduces the concept of Bayesian games and games with an unknown number of players. We define the Sybil extension game as a symmetric game with incomplete information and the Sybil Nash equilibrium. We provide examples of non-Sybil proof and Sybil-proof games and discuss the reward distribution mechanism, cake cutting with heterogeneous valuations, and permissionless bidding rings in second-price auctions.

Section 3 introduces  Sybil games with identity commitments. We define the Sybil commitment Nash equilibrium and provide the necessary conditions for a mechanism to be Sybil-proof with identity commitments. We give examples of non-Sybil with commitment-proof games such as Cournot and pro-rata mechanisms.

Finally, in Section 4, we provide conclusions and discuss future research directions.

Overall, the paper provides a comprehensive framework for studying Sybil-proof property in mechanism design, highlights the importance of considering Sybil attacks in the design of mechanisms for distributed systems, and computes the cost of having Sybil-proof mechanisms in terms of efficiency.

\subsection{Notation}

In this paper, we introduce various mathematical notations that are used throughout the text. We start denoting by $\mathcal N$ the set of players, $\Theta_i$ the set of types, by $U_i$ as the utility function of player $i$, which represents the preferences of player $i$ over different outcomes in a game. The action of player $i$ is denoted by $x_i$, while the joint action of all players except player $i$ is represented by $x_{-i}$.
We also use the notation $\textbf{1}$ to represent a vector of $n$ elements with all entries being 1. In the context of group theory, we define $S_n$ as the symmetric group of $n$ elements, which consists of all possible permutations of $n$ elements. Furthermore, $S_\infty$ is the symmetric group of an infinite number of elements.
In the realm of matrices, we use $A^\infty$ to denote the direct sum of infinite copies of a matrix $A$.
Additionally, we introduce the Sybil cost function, denoted by $C$, which measures the cost incurred by a system when facing Sybil attacks or the cost for an attacker to create Sybil identities. Lastly, we use the notation $\text{NE}(G)$ to represent the set of Nash equilibria in a game $G$, where a Nash equilibrium is a stable state in which no player can unilaterally change their strategy and improve their utility.

\section{Sybil Games} 
In this section, we model the Sybil extension game of a game with incomplete information. Informally, the game is modeled as follows. There are a finite but unbounded number of players and two types of players: active players and inactive players. Players types are drawn from a distribution. Each player can choose the number of players to report to the game and a vector of strategies of each Sybil identity to the game. Each Sybil creation induces some costs to the player. The payoff of each player is the sum of the payoffs of the Sybils minus the costs of creation. A game is Sybil-proof if no player has incentives to create false identities.
For the motivating applications of internet-enabled permissionless mechanisms, it appears unlikely for there to be a common prior over types. 
Thus, the natural solution concept for mechanisms is Sybil-proof and dominant strategy incentive compatible (DSIC). 

\subsection{Preliminary}

\textbf{Games with incomplete information, see \cite{roughgarden2010algorithmic}}: In games with incomplete information, each party has a private type $\theta_i \in \Theta_i$, where the joint vector $t = (\theta_1,...,\theta_n)$ is assumed to be drawn from some publicly known distribution $\mathcal D$. The point of such type, $\theta_i$, is that it affects the utility function of party $i$: namely, the utility $U_i$ depends not only on the actions $s_1,...,s_n$, but also on the private type $\theta_i$ of party $i$, or, in even more general games, on the entire type vector $\theta$ of all the parties. With this in mind, generalizing the notion of Nash equilibrium to such games is straightforward. (The resulting Nash equilibrium is also called Bayesian.)

More formally, see \cite{kajii1997robustness} for more details, a game with incomplete information is a tuple $G=(\mathcal N,\Theta, A, p,U)$ consits of:
\begin{enumerate}
    \item A set of players $\mathcal N$.
    \item A set $\Theta = \prod_{i\in\mathcal N}\Theta_i$, where $\Theta_i$ is the (finite) set of possible types for player $i$.
    \item A set of $A=\prod_{i\in\mathcal N}A_i$, where $A_i$ is the set of possible actions for player $i$.
    \item A joint probability distribution $p((\theta_i)_{i\in\mathcal N})$ over types.
    \item Payoff function $U_i:A\times \Theta\rightarrow \mathbb R$ of player $i$.
\end{enumerate}
In this setting, a Bayesian pure strategy for a player $i$ in $G$ is a function $s_i:\theta_i\rightarrow A_i$.

\textbf{Bayesian Nash equilibrium, see \cite{kajii1997robustness}}\footnote{Check also Standford lecture notes \url{https://web.stanford.edu/~jdlevin/Econ\%20203/Bayesian.pdf}}: A Bayesian strategy profile $(s_i)_{i\in\mathcal N}$ is a Bayesian Nash equilibrium if for all $i$,
\begin{equation*}
    \mathbb E_{\theta_{-i}}[u(s_i(\theta_i),s_{-i}(\theta_{-i});\theta_i,\theta_{-i})]\geq \mathbb E_{\theta_{-i}}[u(\tilde{s_i}(\theta_i),s_{-i}(\theta_{-i});\theta_i,\theta_{-i})]\text{ for all }i,\theta_i,\tilde{s_i}.
\end{equation*}
With these definitions, we define a game with an unknown number of players, as the game $G=(\mathcal N,\Theta, A, p,U)$ with incomplete information defined as follows:
\begin{enumerate}
    \item For each player $i$, the set of types is $\Theta_i=\{\text{active,inactive}\}$.
    \item $U_i$ just depends on the action taken by the active players, and $U_i$ is zero for inactive players.
\end{enumerate}
\subsection{Symmetric games with Sybils}

We define the Sybil game of a symmetric normal game \cite{cheng2004notes} given by a function $U$ over a topological space $\textbf{A}$ (in general $\mathbb R^l_+$) of the following form. 
Let $(\textbf{A},0)$ be a pointed topological space and $U:\textbf{A}^\infty\rightarrow\mathbb R^\infty$ continuous map, where $\textbf{A}^\infty$ is the direct sum of infinite numerable copies of $\textbf{A}$\footnote{The direct sum is the set of elements of infinite tuples but just finite number of non-trivial elements, see the appendix for more details.}. We assume that $U_i(x)=0$ for all $x\in\textbf{A}^\infty$ such that $x_i=0$ and we assume that $U$ is anonymous (or symmetric) i.e. $U_{\sigma(i)}(\sigma(x))=U_i(x)$ for all permutations $\sigma \in S_\infty$\footnote{$S_\infty$ denotes the set of bijections of $\mathbb N$.}, meaning that the obtained payoffs are invariant under permutation of different identifiers. We call these maps \textit{anonymous players map} and \textbf{AnPM} for short.

The map $U$ induces a symmetric normal game by restricting to $\textbf{A}^n$ by taking the tuple $G_n=([n],\textbf{A}^n,U_{|\textbf{A}^n})$. Then, for a map of this form, we consider the following static game $\textbf{Sy}(U,C)=(\mathcal D,\tilde{U},\textbf{A}^\infty,C)$:

\begin{enumerate}
    \item Set of players $\mathcal N$ is drawn from a public known distribution $\mathcal D$.
    \item The action space of each (active) player is $\textbf{A}^\infty$. Observe that we can naturally embed $(\textbf{A}^\infty)^n\hookrightarrow \textbf{A}^\infty$. And so, for a vector of action $a\in (\textbf{A}^{\infty})^n$ we have that $U(a)$ is well-defined.
    \item The cost of generating Sybils is modelled by a function $C:\mathbb R_+\times \mathbb R_+\rightarrow\mathbb R_+$ where $C(x,y)$ is the cost of a player that generates $x$ identities and the other players generate $y$ identities. In general, we will assume that $C(x,y)=cx$ for some $c\geq0$ \footnote{However, there are contexts where the cost function can depend on other player's number of Sybils for example if players buy the sybils from a third party that is selling its Sybils. Then the cost per Sybil will follow the laws of supply and demand.}.
    \item Finally, if each player $i$ set of Sybils is $\mathcal I_i$ and if the set of actions taken by all Sybils is $a=(a^1,...,a^N)$ the total payoff of the player $i$ is
    \begin{equation*}
        \tilde{U}_i(a)=\sum_{j\in\mathcal I_i}U_j(a)-C(|\mathcal I_i|,|\bigcup_{k\not=i}\mathcal I_k|).
    \end{equation*}
\end{enumerate}
We call $\textbf{Sy}(U,C)$ the Sybil extension game or Sybil game of $U$. Observe, that the static game $G_N$ is a subgame of $\textbf{Sy}(U,C)$.
In general, we will assume that the cost function is zero or $C(x,y)=cx$ for some positive real value $c$. Observe that if $C(1,y)=0$ and $C(x,y)=+\infty$ for $x\geq2$, then the Sybil game is equivalent to the underlying game since reporting exactly one identity dominates reporting more than one.

In anonymous permissionless environments, the results of players creating Sybils in order to increase their payoff can lead to different negative externalities \cite{levine2006survey,mazorra2022price}. And so, in these domains, it is the desired mechanism that does not lead to Sybil attacks. That is a mechanism such that players have no incentives to Sybil attack. This type of mechanism is called Sybil-proof, Sybil resistant or false-name proof \cite{alkalay2014false,chen2019axiomatic}.

 In the games given by a symmetric map $U$, we say that $U$ is (strictly) \textit{Sybil-proof} if every strategy of the Sybil game $\textbf{Sy}(U,C)$ is (strictly) dominated by a strategy of the subgame $G_N$. More formally, for every player $i$, every set of Sybils $\mathcal I_i$ and tuple of actions $s_i=(a_j)_{j\in \mathcal I_i}$ there exists $\tilde{s}\in \textbf{A}$ such that:
\begin{equation}\label{equation:Sybil_proof}
    U_i(\tilde{s},s_{-i})-C(1,|\bigcup_{k\not=i}\mathcal I_k|)\underset{(>)}{\geq} \tilde{U}_i(s_i,s_{-i})
\end{equation}
Observe that if the game is Sybil-proof for $C=0$ then is Sybil-proof for all $C$.
In case that $(\textbf{A},+,0)$ is a commutative topological monoid (e.g. $(\mathbb R_+,+,0)$), we say that $U$ is Sybil-proof if the equation \ref{equation:Sybil_proof} holds for $\tilde{s} = \sum_{i\in\mathcal I_i} a_i$. That is, for every player $i$, every set of Sybils $\mathcal I_i$ and tuple of actions $s_i=(a_j)_{j\in \mathcal I_i}$ it holds:
\begin{equation}\label{equation:Sybilproof2}
    U_i\left (\sum_{i\in\mathcal I_i} a_i,s_{-i}\right )-C(1,|\bigcup_{k\not=i}\mathcal I_k|)\underset{(>)}{\geq}\tilde{U}_i(s_i,s_{-i})
\end{equation}

As we have seen in \ref{subsection:example}, not all games are Sybil-proof. However, for any $\mapname$ $U$, the Sybil game $\textbf{Sy}(U,C)$ is Sybil-proof. In the following, we will provide a list of functions $U$ that are Sybil-proof for cost function $C=0$:
\begin{enumerate}
    \item The map defined by $U_i(x)=\frac{x_i}{x_i+\sum_{j\not=i}x_j}R-x_i$ for each $i\in\mathbb N$.
    \item With $P,C:\mathbb R_+\rightarrow\mathbb R$ convex function and linear function, respectively, the map $U_i(x)=x_iP(\sum_{i\in\mathbb N}x_i)-C(x_i)$.
    \item All pro-rata games \cite{johnson2022concave}, games of the form $U_i(x_i,x_{-i}) = \frac{x_i}{x_i+x_{-i}}f(x_i+x_{-i})$ with $x_i,x_{-i}\in \mathbb R$ and $f:\mathbb R_+\rightarrow\mathbb R$.
    \item Games of the form $U_i(x,y)= g(x)f(x+y)$ with $g,f:\mathbb R_+\rightarrow \mathbb R$ with $f(0)=0$, concave and $f(w)=0$ for some $w>0$, $g(x)\geq x$, e.g. $g(x)= \max\{x^2,\sqrt{x}\}$.
\end{enumerate}
For a game of this form, for a vector $x$ of actions, we will denote the social welfare as $\textbf{W}(x) = \sum_{i=1}^\infty U_i(x)$. This is well-defined since $x\in\textbf{A}^\infty$.

\textbf{Note}: In some cases, it is interesting to analyse games where players have some constraints (e.g. budget constraints). Assuming that $\textbf{A}=\mathbb R_+$, then we assume that each player has some budget constraint $M_i\in \mathbb R_+$, and so, just can choose actions $a\leq M_i$. In the Sybil game, this extends to the sum of actions taken by its Sybils can not exceed $M_i$, i.e., $\sum_{j\in\mathcal I_i}a^i_j\leq M_i$. We will call this game Sybil game with budget constraints.

\subsection{Examples}
Let's consider the following normal game with $n$ players, where a reward $R$ is shared among the identities reported to the mechanism. Each active player can report an arbitrary number of identities, and each creating of identity has a cost $c\geq 0$. Each identity has a trivial set of actions (participate). And so, the Sybil extension game can be modelled as a game with unknown number of players and payoff:
 \begin{equation}\label{equation:payoffexample}
     U_i(x,y) = \frac{x}{x+y}R-cx,
 \end{equation}
 where $x$ is the number of players reported by the $i$th player and $y$ is the total number of identities reported by the other players. In this scenario, as we show in the following proposition, the welfare loss is of the order of magnitude of $n$. 

\begin{proposition}\label{prop:game1} The game defined by \ref{equation:payoffexample} has a symmetric mixed Nash equilibrium. The strategy consists of randomizing between two actions. The mixed strategy is defined as
\begin{equation*}
    \pi=\begin{cases}
        \lfloor \frac{R/c(n-1)}{n^2}\rfloor,\text{ with probability }p,\\
        \lceil \frac{R/c(n-1)}{n^2}\rceil,\text{ with probability }1-p,
        \end{cases}
\end{equation*}
for some $p\in [0,1]$. The welfare in equilibrium is $\Theta (R/n)$ and so the ratio of the optimal social welfare and the welfare in equilibrium is $\Theta(n)$.
 \end{proposition}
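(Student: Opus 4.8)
The plan is to locate the symmetric equilibrium first in a continuous relaxation, then transfer it to the integer game, and finally read the price of anarchy off a closed form for social welfare.

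First I would treat the number of reported identities as a real variable: if each of the $n-1$ opponents reports $k$, a player's payoff $x\mapsto \frac{x}{x+(n-1)k}R-cx$ is strictly concave in $x$ with unique interior maximiser $F(k)=\sqrt{R(n-1)k/c}-(n-1)k$, so a symmetric equilibrium of the relaxed game is a fixed point $F(k)=k$, i.e.\ $nk=\sqrt{R(n-1)k/c}$. This yields $k^{*}=\frac{(n-1)R}{cn^{2}}$, exactly the quantity in the statement, at which each player earns $\frac Rn-ck^{*}=\frac{R}{n^{2}}>0$.

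Next I would transfer this to the integer game. No action above $R/c$ is ever a best reply (there the payoff is negative since $x/(x+Y)\le1$), so I restrict to the finite action set $\{0,1,\dots,\lceil R/c\rceil\}$ and invoke Nash's theorem for finite symmetric games to obtain a symmetric mixed equilibrium $\sigma$. Against any mixed opponent profile, a player's expected payoff $\Pi(x):=R\,\mathbb{E}[x/(x+Y)]-cx$, where $Y$ is the total number of identities reported by the other $n-1$ players, is strictly concave in $x$ (each conditional map $x\mapsto x/(x+y)$, $y>0$, is strictly concave, and $\Pr[Y>0]>0$ in any equilibrium), so its set of integer maximisers --- which contains $\mathrm{supp}(\sigma)$ --- is a singleton or a pair $\{j,j+1\}$ of consecutive integers. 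In the two-point case both points are best replies, hence $c=R\,\mathbb{E}\big[\tfrac{Y}{(j+Y)(j+1+Y)}\big]$; since $Y\in[(n-1)j,(n-1)(j+1)]$ and $y\mapsto\tfrac{y}{(j+y)(j+1+y)}$ is decreasing there (its derivative has the sign of $j(j+1)-y^{2}$, and $(n-1)j>\sqrt{j(j+1)}$ once $n\ge3$), evaluating at the two endpoints sandwiches $c/R$ and unwinds to $k^{*}-1+\tfrac1n\le j\le k^{*}-\tfrac1n$, whose only possible integer value is $j=\lfloor k^{*}\rfloor$; the singleton case, treated the same way via $\Pi(j)\ge\Pi(j\pm1)$, gives $|j-k^{*}|\le\tfrac1n$ and hence $j\in\{\lfloor k^{*}\rfloor,\lceil k^{*}\rceil\}$, which are the degenerate $p\in\{0,1\}$ of the statement. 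A final check --- the common equilibrium payoff $\frac Rn-c\,\mathbb{E}_\sigma[x]\ge\frac Rn-c\lceil k^{*}\rceil\ge\frac{R}{n^{2}}-c$ is nonnegative in the interesting regime $R\ge cn^{2}$, while every truncated-away action has negative payoff --- shows $\sigma$ is an equilibrium of the untruncated game.

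For the price of anarchy I would use that the shares telescope: for any profile with $\sum_i x_i\ge1$ one has $\mathrm{SW}(x)=\sum_i U_i(x)=R-c\sum_i x_i$, so welfare is maximised by the honest profile (each active player one identity) and $\mathrm{OPT}=R-cn$. In the equilibrium $\sigma$ constructed above every player reports at least $\lfloor k^{*}\rfloor\ge k^{*}-1$ identities, so pathwise $\mathrm{SW}(\sigma)\le R-cn(k^{*}-1)=\tfrac Rn+cn$; when $R\ge cn^{2}$ this is at most $2R/n$ while $\mathrm{OPT}\ge R-cn\ge\tfrac R2$, whence $\mathrm{PoA}\ge\frac{R-cn}{2R/n}\ge\frac{n-1}{2}=\Omega(n)$. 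The step I expect to be the main obstacle is the identification $j=\lfloor k^{*}\rfloor$: it requires choosing the right form of the discrete optimality inequalities, proving the monotonicity of $y\mapsto y/((j+y)(j+1+y))$ over the whole support of $Y$ (where $n\ge3$ enters), and pushing the two endpoint evaluations through the arithmetic to a clean sandwich for $j$. The continuous guess and the welfare bookkeeping are routine; the hypothesis $R=\Omega(cn^{2})$ is genuinely needed for the $\Omega(n)$ conclusion, since otherwise the equilibrium involves only $O(1)$ Sybils per player and the price of anarchy degrades to $\Omega(1)$.
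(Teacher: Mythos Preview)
Your approach follows the same high-level structure as the paper's proof: solve the continuous relaxation to locate $k^{*}=(n-1)R/(cn^{2})$, argue the integer game has a symmetric mixed equilibrium supported on the two integers bracketing $k^{*}$, and read off the $\Omega(n)$ welfare loss from the closed form $\mathrm{SW}(x)=R-c\sum_i x_i$. The continuous step in the paper is essentially identical to yours (it even derives the same best-response map $x^{\star}(y)=\sqrt{Ry/c}-y$ and payoff $R/n^{2}$).

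Where you diverge is in the discrete step, and here you go well beyond the paper. The paper's argument for the mixed equilibrium is a single sentence: it asserts that ``there is a $p$'' equalising the payoffs of $\lfloor k^{*}\rfloor$ and $\lfloor k^{*}\rfloor+1$ against $\pi_{-i}$, and declares this a Nash equilibrium---without checking that such a $p$ exists, or that indifference between these two actions makes them global best replies. Your route---truncate to a finite action set, invoke Nash's existence theorem for finite symmetric games, use strict concavity of $x\mapsto R\,\mathbb{E}[x/(x+Y)]-cx$ to force the support onto at most two consecutive integers, and then sandwich via the monotonicity of $y\mapsto y/((j+y)(j+1+y))$ to pin down $j=\lfloor k^{*}\rfloor$---is genuinely more complete and supplies exactly what the paper omits. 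The support-identification argument is entirely yours; the paper does not attempt it.

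One minor correction: you take $\mathrm{OPT}=R-cn$ (the honest profile), but the unconstrained social optimum is $R-c$ (a single identity reported in total). This only improves your ratio, so the $\Omega(n)$ conclusion stands unchanged.
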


In other words, there is a loss of welfare of order of $\Theta(n)$ due to the ability of players to submit false identities. We will see this phenomenon is more general in the following section. We call this ratio the \textit{Price of Identity loose}.

\subsection{Symmetric mechanisms with Sybils}
In the previous section, we have defined the Sybil extension of a game where each player has a symmetric payoff functions in terms of the outcomes. In this section, we will discuss Symmetric mechanisms where players do not necessarily have Symmetric payoff on these outcomes.

A mechanism for finite but bounded number of players is given by:
\begin{enumerate}
    \item Players types spaces $T_i$ and players action spaces $A_i$ for all $i\in\mathbb N$. We assume that there exist a null action $0\in A_i$ (not acting or not participating in the mechanism).
    \item An alternative set $X$.
    \item Players valuations functions $V_i:T_i\times X\rightarrow \mathbb R$.
    \item An outcome function $a:\bigoplus_{j=1}^\infty A_i\rightarrow X$.
    \item Payment functions $p_i:\bigoplus_{j=1}^\infty A_i\rightarrow \mathbb R$.
\end{enumerate}
The game with strict incomplete information induced by the mechanism is given by using the types spaces $T_i$, the action spaces $X_i$ and the utility functions $U_i(t_i,x)=V_i(t_i,a(x))-p_i(x)$. We say that a mechanism is symmetric if:
\begin{itemize}
    \item $A_i = A_j$ for all $i,j$, so we can write $A = A_i$
    \item $a$ is invariant under permutations, i.e.,  holds $a(\sigma(x))=a(x)$ for all $x\in\bigoplus_{j=1}^\infty A_j$ and $\sigma\in S_\infty$
    \item For all permutations $\tau\in S_{\infty}$ such that $\tau(i)=j$ and $\tau(j)=i$, it holds $p_i(\tau(x))=p_j(x)$ for all $x\in\bigoplus_{j=1}^\infty A_j$.
\end{itemize}
Less formally, a mechanism is symmetric if the outcome does not depend on the identity taking the action just the joint set of actions. 

A symmetric mechanism is \textit{Sybil-proof} if no players have incentives to report more than one action. That is, we can consider the Sybil extension game, where players can report more than one action, so is Sybil proof if for every finite set $\mathcal I\subseteq \mathbb N$ (w.l.o.g. we can assume $\mathcal I=[k]$) and every action tuple of actions $x$ exist $\tilde{x}=(z,0,...,0,x_{-[k]})$ such that
\begin{equation*}
    U_1(\tilde{x})\geq \sum_{i=1}^k U_i(x)
\end{equation*}
Examples of Sybil-proof mechanisms are First-price auction and second-price auction. On the other hand, cake-cutting mechanisms, cost-sharing mechanisms and  VCG combinatorial auctions are in general not Sybil-proof (more details in following sections).

\subsection{Reward Distribution Mechanisms}

The distribution of resources among multiple players is a vital concern in both economics and game theory. One key and straightforward category of challenges in this area involves creating mechanisms for distributing rewards. These mechanisms seek to allocate a divisible asset with a symmetric value of $R$ to players who have a linear payoff function. The main goal of these mechanisms is to maximize the social welfare.
However, efficient mechanisms are often vulnerable to false-name or Sybil strategies, in which a single player pretends to be multiple identities in order to gain a larger share of the resources. 
In this section, we propose a dominant strategy incentive compatible (DSIC) and Sybil-proof mechanism for allocating a part of the fungible item. We also prove that this mechanism is the optimal solution for this class of mechanisms. 
The mechanism will consist of sufficiently shrinking the "pie" per self-reported identity. We call this technique the \textit{pie shrinking with crowding}. Furthermore, we investigate the case where players have some information about the number of players, and explore how this impacts the design and performance of the mechanism.

 To investigate the issue of Sybil resistance in the context of reward distribution mechanisms, we assume that some notion of value is shared among players. Moreover, we assume that the cost of generating Sybils is $c=0$ (in general, results will also apply when $c>0$). 
 The objective of the reward disturbing mechanism is to distribute some reward $R$ among players. 
 A reward distribution mechanism will consist of a $\mapname$ function $U:\textbf{A}^\infty_+\rightarrow \mathbb R^\infty$ (with $\textbf{A}=\mathbb Z_{\geq0}\text{ or }\mathbb R_+$) so that is symmetric, aggregative and $\sum_{i=1}^\infty U_i(a)\leq R$ for all $a\in \textbf{A}^\infty$. 
 We denote by $\mathcal M(\textbf{A})$ the set of mechanisms of this form.

We say that a mechanism $U\in\mathcal M(\textbf{A})$ is \textit{symmetric in equilibrium} if $U$ is symmetric and the equilibriums of the normal game $G_n=([n],U_{\mid \textbf{A}^n},\textbf{A}^n)$ have the same payoff for all players. That is, for all $x\in \text{NE}(G_n)$, we have that $U_i(x)=U_j(x)$ for all $i,j\in[n]$. We say that the mechanism $U\in\mathcal M(\textbf{A})$ is Sybil-proof if the equation \ref{equation:Sybilproof2} holds.

One approach to addressing the issue of Sybil-proofness in reward distribution mechanisms is to consider mechanisms that assume the existence of some notion of shared value, such as capital or money, among the players. Under this assumption, the mechanism can use the value locked or burned to determine the reward of the player. To describe optimal allocation mechanisms in equilibrium, we consider the \textit{pro-rata} games, a family of mechanisms that are Sybil-proof and symmetric.

The pro-rata games defined in \cite{johnson2022concave} are a particular type of aggregative games \cite{jensen2018aggregative}\footnote{Actually, every aggregative, continuos, Sybil and collusion resistant is a pro-rata game}. A pro-rata game with $n$ players is defined as the game with the following payoff function for players $i=1,...,n$:
\begin{equation}\label{payoffpro}
    U_i(x) = \frac{x_i}{\1\cdot x}f(\1\cdot x).
\end{equation}
Here, $f:\mathbb R_+\rightarrow\mathbb R$ is some function satisfying $f(0)\geq0$, while $x\in\mathbb R^n_+$ is a non-negative vector whose $i$th entry is the action performed by the $i$th player. The games such that $f(0)=0$ and $f$ is concave are called concave pro-rata games. We define the discrete pro-rata games as the pro-rata game restricted to $\mathbb Z_{\geq0}$.

Given the pro-rata games, if we want to allocate the reward $R$ we can choose a function $f$ such that $\max_{z\geq0}f(z)=R$, and we allocate to each player the amount $U^f_i(x_i,x_{-i})$ that is the payoff of the pro-rata game with function $f$, see equation \ref{payoffpro}. 

An initial example involves the constant function $f(x)=R$, leading to the pro rata game related to $f$ being $U_i(x_i,x_{-i}) = \frac{x_i}{x_i+x_{-i}}R$. Note that for any set of actions taken by the players, this results in an optimal welfare distribution. It is evident that all players aim to present (or lock) the highest possible value\footnote{This mechanism is typically referred to as proof of stake (PoS) \cite{king2012ppcoin}.}. In situations where all players possess a finite amount of value, a Nash equilibrium exists, since locking the entire budget is a strictly dominant strategy.
And so this mechanism is prior-independent. Otherwise, if players have no budget constraints, then the game does not have equilibrium. It is important to note that this mechanism, while being Sybil-proof and welfare optimal, has several drawbacks that limit its practical usefulness. In particular, the mechanism tends to disproportionately reward more capitalized players, leading to an asymmetric distribution of rewards. 
Additionally, the mechanism is unrealistic in the sense that it does not account for the fact that players may be able to borrow money in order to increase their payoffs. In reality, loans would come with fees that would need to be taken into consideration in the design of the mechanism. As a result, it may be necessary to consider alternative mechanisms that address these limitations in order to achieve a more symmetric and realistic reward distribution. Moreover, capitalized players would have some additional loss due to the inflation rate.

Now, we will try to find the optimal social welfare reward distribution mechanism, symmetric (the mechanism is independent of players identifiers), Sybil-proof, with strictly dominant strategy and budget independent. The need for symmetry on the mechanism comes from the arbitrariness (or the intrinsic characteristics) of the order of the identifiers. 
\begin{lemma}\label{lemma} Any symmetric, strategy-proof and Sybil-proof anonymous map mechanism $U$ (all players have the same valuation and type) can be reformulated as a discrete pro-rata mechanism with action space being the number of identities reported.
\end{lemma}

By this lemma, is enough to study the mechanism of the following form.
Let $r:\mathbb N\rightarrow \mathbb R$ be a decreasing map bounded by $r(1)=:R$. Assume that $n$ players are reported to the mechanism, then $r(n)$ is split among the reported players. So, the total payoff distributed is $r(n)$ and each player obtains $r(n)/n$ in equilibrium. Clearly, if $r(n)=0$ for all $n\geq 3$ this mechanism is Sybil-Proof, however, the social welfare is zero. Another example of mechanism is $r(n)=R$, however, as mentioned before, this mechanism is not Sybil-proof in general.
 
These types of mechanisms are completely determined by the function $r(n)$. Previously, we presented a mechanism that is Strategy-proof and Sybil-Proof, but is not independent of players budget constraints. In the following, we will give the optimal mechanism in the family of mechanisms of a particular form. That is, given the family of mechanisms 
\begin{equation*}
    \mathcal M=\{r:\mathbb N\rightarrow [0,R]\mid r\text{ induces a Sybil-Proof mechanism }\forall c>0,n\},
\end{equation*}
the optimal social welfare mechanism is $r_{max}(n)=\text{argmax}_{r\in\mathcal M}\{r(n)\}$.

Observe that if $l$ Sybils are reported to the mechanism, the payoff obtained by a Sybil is $r(l)/l$ and so, the payoff of the player is $U_i(x,y) = x/(x+y)r(x+y)-cx$. Therefore, is a discrete pro-rata game. Now take a player $i$ and assume that there are $y$ of reported players in the game. Then, the best response is $x\in\mathbb Z_{\geq0}$ such that

\begin{equation*}
\begin{aligned}
& \underset{x}{\text{maximize }}x\frac{r(y+x)}{y+x}-cx\\
& \text{subject to } x\in\mathbb Z_{\geq0}
\end{aligned}
\end{equation*}

Therefore, a mechanism $r\in \mathcal M$, if and only if, $\frac{r(1+y)}{1+y}\geq x\frac{r(y+x)}{y+x}$ for all $x\geq1,y\geq0$. In particular, $r$ holds
$r(1+y)/(1+y)\geq 2r(2+y)/(2+y)$ for all $y\geq0$. With this inequality, it can be deduced the following proposition.
\begin{proposition}\label{prop:optimalproof} The distribution mechanism given by $r(n)=\frac{n}{2^{n-1}}R$ is a Sybil-proof mechanism for all $c\geq0$. Moreover, $r=r_{max}$ i.e. the mechanism proposed, is the social welfare optimal in $\mathcal M$.
\end{proposition}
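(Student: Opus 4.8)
The plan is to run everything off the best-response characterization derived just above the statement: a map $r$ lies in $\mathcal M$ if and only if
\[
\frac{r(1+y)}{1+y}\ \geq\ x\,\frac{r(x+y)}{x+y}\qquad\text{for all integers }x\geq 1,\ y\geq 0 .
\]
Indeed, Sybil-proofness for a cost $c$ means $\frac{r(1+y)}{1+y}-c\geq x\frac{r(x+y)}{x+y}-cx$, i.e.\ $\frac{r(1+y)}{1+y}-x\frac{r(x+y)}{x+y}\geq -c(x-1)$, and for $x\geq1$ the right-hand side is $\leq 0$; hence the binding case is $c=0$ and the displayed inequality is exactly equivalent to ``$r\in\mathcal M$ for all $c\geq 0$''. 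Writing $g(n):=r(n)/n$ for the per-capita equilibrium payoff, the condition reads $g(1+y)\geq x\,g(x+y)$. The proof then has two independent halves: (i) the proposed $r(n)=\frac{n}{2^{n-1}}R$ satisfies it (so it is a Sybil-proof mechanism for every $c\geq0$); and (ii) every $r\in\mathcal M$ is dominated at every $n$ by this $r$, so it equals $r_{\max}$.

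For half (i), substituting $r(n)=\frac{n}{2^{n-1}}R$ gives $g(n)=R/2^{\,n-1}$, and $g(1+y)\geq x\,g(x+y)$ becomes $2^{\,x+y-1}\geq x\,2^{\,y}$, i.e.\ $2^{\,x-1}\geq x$. This holds for every $x\geq1$ by a one-line induction: it is an equality at $x=1,2$, and if $2^{\,x-1}\geq x$ then $2^{\,x}=2\cdot 2^{\,x-1}\geq 2x\geq x+1$. I also need to confirm $r$ is a legitimate member of the family, i.e.\ non-increasing with values in $[0,R]$: the sequence $\frac{n}{2^{n-1}}$ equals $1$ at $n=1,2$ and strictly decreases thereafter, so $r(n)\in[0,R]$ and $r$ is decreasing, with $r(1)=R$.

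For half (ii), instantiate the characterization at $x=2$ and $y=n-2$ (for any $n\geq2$, noting $y\geq 0$ is allowed): $\frac{r(n-1)}{n-1}\geq 2\,\frac{r(n)}{n}$, that is $g(n)\leq \tfrac12\,g(n-1)$. Iterating from $g(1)=r(1)\leq R$ yields $g(n)\leq R/2^{\,n-1}$, hence $r(n)=n\,g(n)\leq \frac{n}{2^{n-1}}R$ for every $n$. Thus $\frac{n}{2^{n-1}}R$ is, for each fixed $n$, an upper bound on $r(n)$ over all $r\in\mathcal M$, and half (i) shows it is attained; therefore $r_{\max}(n)=\frac{n}{2^{n-1}}R$, which is the claim.

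The calculations are short; the only thing requiring care is the logical shape of the argument — that a single instance of the constraint ($x=2$) already forces the geometric decay of $g$, whereas the candidate mechanism must be verified against all $x\geq 1$, which is precisely where $2^{\,x-1}\geq x$ is used. A secondary remark worth including explicitly: since $c=0$ is the worst case, Sybil-proofness for all $c\geq 0$ follows immediately, and the optimality argument invokes nothing about a common prior or budgets, so the optimal mechanism is prior-free and budget-independent as advertised.
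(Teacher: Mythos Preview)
Your proof is correct and follows the same approach as the paper: reduce to $c=0$, instantiate the Sybil-proof inequality at $x=2$ to obtain the recursion $g(n)\le\tfrac12 g(n-1)$ (equivalently $r(n+1)\le \tfrac{n+1}{2n}r(n)$), and telescope to get $r(n)\le \frac{n}{2^{n-1}}R$. The one substantive addition in your write-up is half~(i): you actually verify that $r(n)=\frac{n}{2^{n-1}}R$ satisfies the full family of constraints for all $x\ge 1$ via $2^{x-1}\ge x$, whereas the paper's proof only establishes the upper bound and leaves membership in $\mathcal M$ implicit --- so your version is strictly more complete.
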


\begin{figure}[!h]
    \centering
    \includegraphics[scale=0.275]{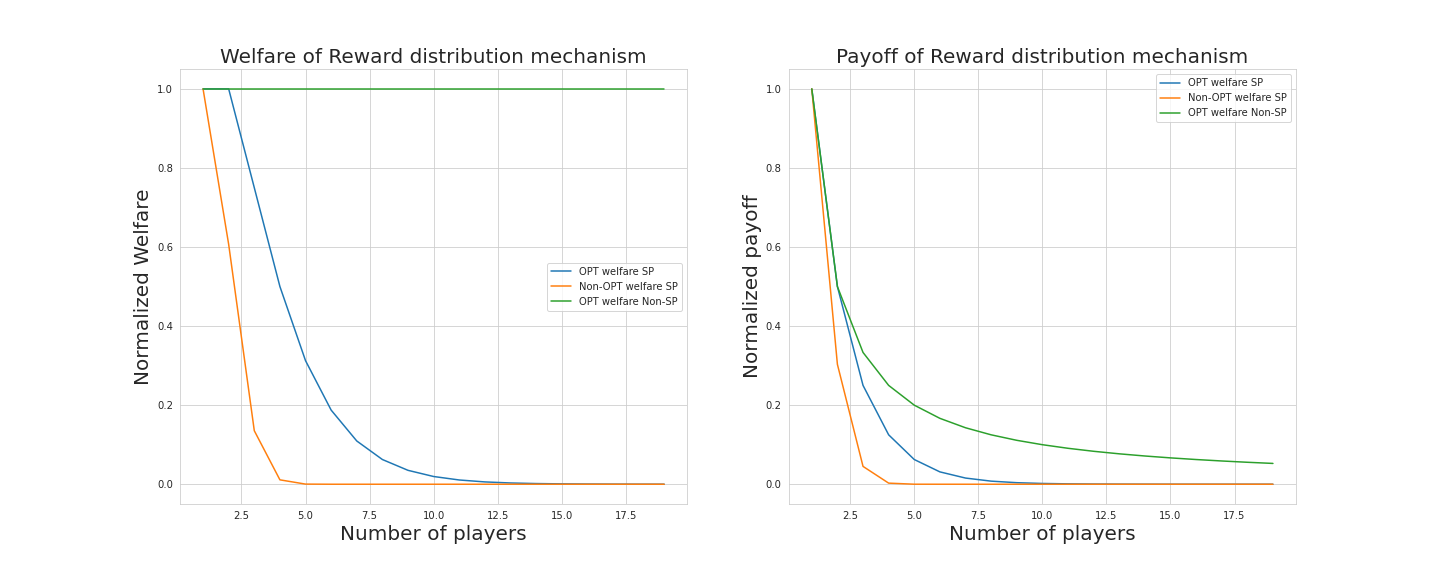}
    \caption{Welfare and payoff of reward distribution mechanisms}
    \label{fig:my_label}
\end{figure}

Observe that with the reward mechanism $r_{max}$, is by construction, strictly dominating strategy to self-report just once. Also, an implication of the proposition is that there does not exist an efficient prior-free reward mechanism. Informally, these mechanisms consists of shrinking the pie optimally to have an optimal Sybil-proof reward mechanism. This technique can also be used if the item is non-divisible.

If the item is non-divisible and players are risk-neutral and share the same valuation on the item, then a similar Sybil-proof mechanism can be achieved by random allocating the item to any player with probability $\Pr[\text{item allocated to the $i$th player}]=\frac{1}{2^{n-1}}$. Clearly, there is (large) inefficiency lost since with probability $1-\frac{n}{2^{n-1}}$ the item is not allocated to any player.

Now we will discuss if players have some knowledge on the number of players. We begin by demonstrating the existence of a family of mechanisms that achieve convergence to the optimal allocation in equilibrium, when the players participating know the total number of players and show that this equilibrium is unique. However, we also highlight the fact that these mechanisms are dependent on the information shared among players, and thus are not robust. 

More formally, we will discuss the following problem. Try to find a mechanism that maximizes the worst-case equilibria with an unknown number of players. That is, the distribution $\mathcal D$ where the number of players is drawn is unknown for the mechanism designer but known for the players. In this case, the players play a Bayesian game with unknown number of players\footnote{In the appendix we discuss the prior-optimal problem, i.e. the problem where the mechanism designer also knows $\mathcal D$.}. We define  $W^{-}(r,\mathcal D) = \min_{x\in\text{NE}(\mathcal D)}\{\mathbb E[\sum_i r_i(x)]\}$ as the worst expected social welfare in Bayesian equilibrium with distribution $\mathcal D$ and mechanism $r$.
\begin{equation}\label{eq:optimal_mechanism}
    r_{\text{max}} = \text{argmax}_{r\in\mathcal M}\min_{\mathcal D} W^{-}(r,\mathcal D)
\end{equation}

An example of a mechanism such that does not reward more capitalized players is the pro-rata mechanism associated to:
\begin{equation*}
    f(x) = \begin{cases} 
                \frac{R}{K}x, \text{ if }0\leq x\leq K,\\
                0,\text{ otherwise.}
            \end{cases}
\end{equation*}
The pro-rata mechanism $U^f$ with known number of players have as set of equilibriums $\{x: \1\cdot x=K\}\subseteq\text{NE}$. And therefore $U^f$ have some equilibrium that achieve OPTW. However, the set of equilibriums contain asymmetric equilibriums, i.e. points such that different players obtain different payoffs. Moreover, there are some equilibriums such that the payoff welfare is zero, e.g. all report $K$. This motivates the following definition.
\begin{definition} A reward distribution mechanism is said to be $K$-\textit{symmetric} if, when the game is played among players with more than $K$ units of value, all players receive the same payoff in equilibrium. If a reward distribution mechanism is $K$-symmetric for all $K$, we say that is budget independent.
\end{definition}
An example of $R$-symmetric pro-rata mechanism is the one given by $f(x) = R-x$. This function, induces the payoff function $U_i(x_i,x_{-i}) = \frac{x_i}{x_i+x_{-i}}R-x_i$. Observe that in this mechanism, using more than $R$ units of value is strictly dominated strategy. Since $f$ is linear, is in particular concave, and therefore, by \cite{johnson2022concave}, if the number of players is known, we have that this game has a unique Nash equilibrium and is of the form $x=(q/n)\textbf{1}$, where, $q$ is the solution of $(n-1)f(q)+qf'(q)=0$ and $n$ is the number of players. And so, the unique equilibrium is $x = R\frac{n-1}{n^2}\textbf{1}$ with payoff $U_i(x) = \frac{R}{n^2}$ for $i=1,...,n$. Then, in equilibrium, the pro-rata mechanism has social welfare $R/n$, so is not an optimal welfare reward distribution mechanism. 

If the players know the total number of players $n$ with no uncertainty, then worst case equilibrium optimality can be achieved asymptotically.
\begin{proposition}\label{prop:nearoptimal} There is a family of functions $\{f_k\}_k$ such that the pro-rata mechanisms $\{U^{f_k}\}_k$ are $R$-symmetric and the social welfare in equilibrium converge to the reward $R$.
\end{proposition}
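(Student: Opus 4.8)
The plan is to exhibit an explicit family $\{f_k\}_k$ of concave functions with $f_k(0)=0$ and $\max_{z\ge 0}f_k(z)=R$ whose maximum is attained at a \emph{fixed} location but becomes arbitrarily ``sharp'' as $k\to\infty$. The heuristic is this: the unique symmetric equilibrium of a concave pro‑rata game, as quoted from \cite{johnson2022concave}, has aggregate $q$ solving $(n-1)f(q)+qf'(q)=0$, i.e.\ $-f'(q)/f(q)=(n-1)/q$, so making the ``turn'' at the top of the pie very steep forces the equilibrium aggregate to lie within $O(1/k)$ of the argmax of $f_k$, hence the equilibrium welfare $f_k(q)$ within $O(1/k)$ of $R$. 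Concretely I would take, for $k$ large,
\[
f_k(x)=\begin{cases} c_k\,x, & 0\le x\le R-\tfrac{c_k}{k},\\[4pt] R-\tfrac{k}{2}(x-R)^2, & x\ge R-\tfrac{c_k}{k},\end{cases}
\]
where $c_k\in(1,\infty)$ is the (unique, with $c_k\to 1$) solution of $R(c_k-1)=c_k^2/(2k)$, chosen so the two branches glue in a $C^1$ fashion at $x=R-c_k/k$. One checks immediately that the slope of $f_k$ is non‑increasing (constant $c_k$, then $k(R-x)$ passing through $0$ at $x=R$, then $-k(x-R)$), so $f_k$ is concave; that $f_k(0)=0$; and that $\max f_k=f_k(R)=R$, whence $\sum_i U^{f_k}_i(x)=f_k(\textbf{1}\cdot x)\le R$. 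Hence each $U^{f_k}$ is a legitimate concave pro‑rata reward distribution mechanism, and in particular is Sybil‑proof (splitting one's stake among sybils leaves the sum of one's shares, hence one's total payoff, unchanged). Note that $f_k$ does not depend on $n$.

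Next I would apply the quoted result of \cite{johnson2022concave}: for a known number $n$ of players, $G_n$ under the concave $f_k$ has a unique Nash equilibrium, it is symmetric, of the form $x=(q_k/n)\textbf{1}$, and its aggregate $q_k=q_k(n)$ satisfies $(n-1)f_k(q_k)+q_k f_k'(q_k)=0$, with equilibrium welfare $\sum_i U^{f_k}_i(x)=f_k(q_k)$. On the linear branch the first‑order condition reduces to $n c_k q=0$, i.e.\ $q=0$, which is not an equilibrium (a single player strictly prefers $x=R$), so by uniqueness the genuine root lies on the parabolic branch, where with $\delta_k:=q_k-R$ the condition reads $(n-1)\bigl(R-\tfrac k2\delta_k^2\bigr)-k(R+\delta_k)\delta_k=0$; solving gives $\delta_k=\tfrac{n-1}{k}+O(k^{-2})$. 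In particular $\delta_k\to 0$, so for every $k$ large enough (depending on $n$) the aggregate $q_k$ does lie on the parabolic branch where the computation is valid, and $q_k/n\to R/n<R$ for $n\ge 2$ (with $q_k=R$ for $n=1$). Consequently the interior symmetric equilibrium is feasible for \emph{every} profile of players endowed with strictly more than $R$ units of value, so $U^{f_k}$ is $R$‑symmetric; and
\[
f_k(q_k)=R-\frac k2\delta_k^2=R-\frac{(n-1)^2}{2k}+O(k^{-2})\ \xrightarrow[k\to\infty]{}\ R ,
\]
so for each fixed $n$ the equilibrium welfare tends to $R$, as claimed.

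The delicate point is the tension between \emph{global} concavity, $f_k(0)=0$, $\max f_k=R$, and a curvature at the peak that must blow up: a concave function with $f(0)=0$ cannot climb from $0$ to $R$ if it is steeply curved everywhere, which is exactly why the linear ``ramp'' of slope $c_k\approx 1$ is inserted on $[0,R-c_k/k]$; and one must verify \emph{a priori} — rather than assume — that the equilibrium aggregate $q_k$ lands on the branch where $f_k$ equals the downward parabola rather than on the linear ramp. Both are handled by the estimate $\delta_k=q_k-R=\Theta(1/k)>0$, which in particular gives $q_k>R>R-c_k/k$ and keeps $q_k$ inside the interval $[R-c_k/k,\,R+\sqrt{2R/k}]$ on which $f_k$ is the downward parabola and nonnegative. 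Existence, uniqueness, and symmetry of the equilibrium, together with the fact that it remains the equilibrium under budgets exceeding $R$, are inherited directly from \cite{johnson2022concave}; everything else is the short explicit computation sketched above.
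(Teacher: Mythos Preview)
Your argument is correct and follows the same core idea as the paper: pick a concave pro-rata function whose peak at height $R$ is made arbitrarily sharp, so that the symmetric equilibrium condition $(n-1)f(q)+qf'(q)=0$ forces the aggregate $q$ to sit within $O(1/k)$ of the argmax, and hence $f_k(q_k)\to R$.

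The execution differs. The paper uses a piecewise \emph{linear} tent
\[
f_\varepsilon(x)=\begin{cases} Rx/(K-\varepsilon),& 0\le x\le K-\varepsilon,\\ R(K-x)/\varepsilon,& x>K-\varepsilon,\end{cases}
\]
which is not differentiable at the apex; to invoke the equilibrium characterization from \cite{johnson2022concave} it therefore passes through an approximating sequence of strictly concave smooth $g_k\to f_\varepsilon$, argues that the equilibrium aggregates $q_k$ converge, and only then lets $\varepsilon\to 0$. Your construction glues a linear ramp to a downward parabola in a $C^1$ fashion, so the first-order condition applies directly and the computation of $\delta_k=(n-1)/k+O(k^{-2})$ and $f_k(q_k)=R-(n-1)^2/(2k)+O(k^{-2})$ is immediate, with no limiting approximation inside the argument. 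This buys you a cleaner, fully explicit one-parameter family and a quantitative rate of convergence, at the cost of the small algebra needed to solve $R(c_k-1)=c_k^2/(2k)$ for the gluing slope. Conversely, the paper's tent is conceptually simpler to write down but requires the extra approximation layer. Both routes lean on the same black box (existence, uniqueness, and symmetry of the equilibrium of a concave pro-rata game), and both verify $R$-symmetry by checking that the per-player equilibrium stake $q_k/n$ stays below $R$; your explicit check that the equilibrium lands on the parabolic branch (rather than the linear ramp, where the first-order condition degenerates to $q=0$) is a point the paper's proof leaves implicit.
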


Therefore, for a reward $R$ if the number of players is information shared among players, there exists a symmetric, Sybil-proof, $R$-symmetric mechanism that achieves nearly optimal welfare. However, the optimality of the mechanism depends on the number of active players being public information among  the players. This, in general, is not true in internet and blockchain based applications. And so, a natural solution concept is the strategy and Sybil-proof pro-rata mechanisms. However, we will see that these mechanisms do not do better, in terms of social welfare,  than the discrete pro-rata mechanisms.

\textbf{DSIC and  Sybil-proof pro-rata mechanism}: Now, let's assume that all players have at least $K$ units of value. We will find constructively a function $f$ such that the pro-rata mechanism has a strictly dominant strategy and $\max f = R$. We will find this function constructively. Let's assume that we want $x=R$ to be the strictly dominant strategy. Therefore, for all $y$ we have that $x=R$ is a local maximum of $U_i(x,y)=\frac{x}{x+y}f(x+y)$ for all $y$. Therefore, it holds that  $\frac{\partial U_i}{\partial x}(K,y) = 0$. So,
\begin{equation*}
    \frac{y}{(K+y)^2}f(K+y)+\frac{K}{K+y}f'(K+y)= 0,\text{ for all }y>0.
\end{equation*}
The unique solution of the last differential equation such that $\text{max}_{x\geq0} f(x)=R$ is $f(x) = \frac{Re}{K}xe^{-x/K}$.  This pro-rata mechanism, has a unique Nash equilibrium and the social welfare with $n$ players is $f(nK)= R\cdot n\cdot e^{-n+1}$. In summary, we have proved the following proposition:
\begin{proposition} There is a unique pro-rata mechanism with twice differentiable function $f$ with strictly dominant strategy that distributes a reward $R$. The mechanism has welfare in equilibrium $R\cdot n\cdot e^{-n+1}$. In particular, we have that  $\text{sup}_{r\in\mathcal M}\min_{\mathcal D} W^{-}(r,\mathcal D)\geq R\cdot n\cdot e^{-n+1}$.
\end{proposition}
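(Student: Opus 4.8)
\emph{Overview and Step 1 (from the dominant-strategy requirement to an ODE).} The plan is to follow the constructive derivation sketched just before the statement and then close the three points it leaves implicit: that the first-order condition is not merely necessary but, after a second-derivative check, yields a \emph{global} best response; that the normalised solution is \emph{unique}; and that the price of anarchy and the $\sup$--$\min$ bound follow by substitution. For a pro-rata mechanism with function $f$, a player controlling aggregate action $x$ against opponents' aggregate $y$ receives $U_i(x,y)=\frac{x}{x+y}f(x+y)$. Requiring that locking the full budget $K$ be a strictly dominant action means that for every $y>0$ the map $x\mapsto U_i(x,y)$ is maximised at $x=K$; since (as Step 2 confirms) the maximiser is interior on $(0,\infty)$, stationarity gives $\frac{\partial U_i}{\partial x}(K,y)=0$, i.e.
\[
\frac{y}{(K+y)^2}f(K+y)+\frac{K}{K+y}f'(K+y)=0\qquad\text{for all }y>0 .
\]
Substituting $z=K+y$ turns this into the separable linear ODE $\frac{f'(z)}{f(z)}=\frac1z-\frac1K$ on $(K,\infty)$, whose solutions are $f(z)=A\,z\,e^{-z/K}$; continuity extends this to $z\ge0$.

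\emph{Step 2 (normalisation, global optimality, uniqueness).} The function $z\mapsto A z e^{-z/K}$ has its maximum at $z=K$ with value $AKe^{-1}$, so the constraint $\max_{z\ge0}f(z)=R$ forces $A=Re/K$ and $f(z)=\frac{Re}{K}z e^{-z/K}$. With this $f$ one computes
\[
U_i(x,y)=\frac{x}{x+y}\cdot\frac{Re}{K}(x+y)e^{-(x+y)/K}=\frac{Re}{K}e^{-y/K}\,x\,e^{-x/K},
\]
which as a function of $x$ alone is a positive multiple of $x e^{-x/K}$ and is therefore strictly maximised at $x=K$ for every $y$. Hence locking $K$ is strictly dominant, the $n$-player game has the unique dominant-strategy equilibrium $x=(K,\dots,K)$, and uniqueness of the mechanism is precisely uniqueness of the normalised ODE solution from Step 1.

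\emph{Step 3 (price of anarchy and the $\sup$--$\min$ bound).} At the $n$-player equilibrium the realised welfare is $\sum_i U_i=f(nK)=\frac{Re}{K}\,nK\,e^{-nK/K}=R\,n\,e^{-n+1}$, while the optimal distributable welfare is $\max f=R$; comparing them gives $\textbf{PoA}=n e^{-n+1}$ in the paper's convention. For the final inequality, this pro-rata mechanism is Sybil-proof for every $c\ge0$ (it is a pro-rata game, hence an \mapname, so $\textbf{Sy}(U)$ is Sybil-proof, and costs $c>0$ only penalise extra identities), and since play is in dominant strategies the equilibrium welfare against any realised number of players $n$ is deterministic and equals $Rne^{-n+1}$; thus this single mechanism witnesses $\sup_{r\in\mathcal M}\min_{\mathcal D}W^-(r,\mathcal D)\ge Rne^{-n+1}$.

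\emph{Main obstacle and a subtlety.} The only genuinely delicate point is the global-maximum check in Step 2: the first-order condition by itself guarantees nothing, and one must exhibit the explicit factorisation $U_i(x,y)\propto x e^{-x/K}$ (equivalently, sign $1-x/K$) to see that $x=K$ is the \emph{unconditional} best response rather than a mere critical point — this is also what underpins uniqueness, since it forces any admissible $f$ to satisfy the ODE at \emph{every} $y$. One should also flag that the derivation fixes the targeted dominant action to be the full budget $K$: for any $x^\star\le K$ an analogous $f(z)=\frac{Re}{x^\star}z e^{-z/x^\star}$ makes $x^\star$ strictly dominant with the \emph{same} equilibrium welfare $Rne^{-n+1}$, so "unique" should be read as "unique once the dominant action is required to exhaust the budget."
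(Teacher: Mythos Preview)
Your argument is correct and follows the same route as the paper's inline derivation preceding the proposition: impose the first-order condition $\partial_x U_i(K,y)=0$ for all $y$, obtain the linear ODE, solve and normalise to $f(z)=\frac{Re}{K}ze^{-z/K}$, and read off the equilibrium welfare $f(nK)=Rne^{-n+1}$. You go further than the paper in two respects: you verify that $x=K$ is a \emph{global} maximiser via the factorisation $U_i(x,y)=\frac{Re}{K}e^{-y/K}\,xe^{-x/K}$ (the paper only imposes stationarity), and you flag that uniqueness is conditional on targeting the full budget $K$ as the dominant action --- a caveat the paper glosses over with its inconsistent use of $x=R$ versus $x=K$.
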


\textbf{Sybil-proof prior-free truthful cake cutting mechanism.}

A natural generalization of the reward distribution problem is the cake-cutting problem, where a divisible good is to be divided among multiple players with heterogeneous valuations. This problem has been well-studied in the literature \cite{brams1995envy,mossel2010truthful}. Previously, we presented a specific cake-cutting mechanism for the case of divisible goods with homogeneous valuations. However, many real-world scenarios involve a heterogeneous good, such as a cake with different flavors or a public good with different features. In this section, we will provide a more general framework for the cake-cutting problem that can handle such heterogeneous goods. We will build upon the work of \cite{mossel2010truthful} to present a modified mechanism that incorporates Sybil resistance, providing a solution for this important problem. We will provide a detailed description of the mechanism and its underlying principles, as well as an analysis of its performance. To do so, we will define cake-cutting mechanisms for finite but unbounded number of players. We will reformulate different notions of fairness, such as envy-freeness and proportionality. We will prove that no Sybil-proof mechanism is efficient, and we will upper bound the worst-case social welfare. To do so, we will again use pie shrinking with crowding.

\textbf{Cake-cutting model}: Let $\mathfrak C$ be a $\sigma-$algebra of the set $\mathcal C$ ($\mathfrak C$ model the slices of the cake, and $\mathcal{C}$ the cake). Let there be $n$ players and let $\mu_1,...,\mu_k$ be non-atomic probability measures on $(\mathcal C,\mathfrak C)$, so the value of the slice $C\in\mathfrak C$ is $\mu_i(C)$. Let $\mathcal L=\{\mu:\text{non-atomic probability measure of }(\mathcal C,\mathfrak C)\}\cup\{0\}$, where $0$ is the map that sends everything to zero. A cake-cutting mechanism with finite but unknown bounded number of players consists of a (random) allocation rule\footnote{ here we denote by $\mathcal L^\infty =\{v\in\prod^\infty_{i=1}\mathcal L:\text{ with finite number of non-trivial components}\}$, and similar for $\mathcal C^\infty$.} $\mathfrak A:\mathcal L^\infty\times\Omega\rightarrow \mathfrak C^\infty$ an where if players $i=1,...,k$ report to the mechanism the valuations $\mu_1,...,\mu_k$ and an event $w\in\Omega$ is drawn, then 
\begin{equation*}
    \mathfrak A((\mu_1,...,\mu_k,0,...),w)_j = \begin{cases}
    \text{slice allocated to player }i,\text{ if }1\leq j\leq k\\ \emptyset,\text{ otherwise}\\
    \end{cases}
\end{equation*}
and for all $(v,w)\in\mathcal L^\infty\times \Omega$ it holds $\mathfrak A(v,w)_i\cap \mathfrak A(v,w)_j=\emptyset$ for all $i,j$. Moreover, we will assume that the allocation is symmetric. That is, for all $(v,w)\in\mathcal L^\infty\times \Omega$ and all permutation $\sigma\in S^\infty$ it holds $\mathfrak A(v,w)_i = \mathfrak A(\sigma(v),w)_{\sigma(i)}$. In cake-cutting literature \cite{robertson1998cake}, different fairness criteria are taken into account for designing cake-cutting mechanisms. In the following, we reformulate the most important once in cake-cutting mechanisms with a finite (but not bounded) number of players.

A mechanism $\mathfrak A$ is:
\begin{itemize}
    \item \textbf{Envy-free in expectation} if, for every allocation, no player prefers another player's slice of cake. More formally, if for all vector of valuations $\mu\in \mathcal L^\infty$ and all $i,j$ it holds $\mathbb E[\mu_i(\mathfrak A(\mu)_i)]\geq\mathbb E[ \mu_i(\mathfrak A(\mu)_j)]$.
    \item \textbf{$\alpha$-proportional} if for all vector of valuations $\mu\in\mathcal L^\infty$ and all $i$ such that $\mu_i\not=0$ it holds $\mu_i(\mathfrak A(\mu,w)_i)\geq \alpha$. We say that is in expectancy $\alpha$-proportional if the same holds in expectancy.
    \item \textbf{Non-wastefulness} if for all vector of valuations $\mu\in\mathcal L^\infty$ and event $w\in\Omega$ we have that $\bigcup_{i=1}^\infty \mathfrak A(\mu,w)_i = C$.
    \item \textbf{Truthful in expectation} if truth-telling is a weak-dominant strategy.  More formally, it holds $E[\mu(\mathfrak A(\mu,v_{-i})_i)]\geq\mathbb E[\mu(\mathfrak A(\nu,v_{-i})_i)]$ for all $\mu,\nu\in\mathcal L,v\in\mathcal L^\infty$ and $i\in\mathbb Z_{\geq0}$.
\end{itemize}
When extending the cake-cutting mechanism to include the possibility of Sybil/false-name valuations, where each player can report multiple (but finite) identities to the mechanism, traditional notions of fairness such as proportionality and envy-freeness no longer apply. For example, in a cake-cutting game where one player uses a Sybil attack to submit multiple valuations, a proportional cake-cutting mechanism would result in an asymmetric distribution as the attacker would receive a larger portion of the cake. Similarly, envy-freeness, which ensures that no player prefers another player's portion of the cake, cannot be guaranteed as players do not know which identities reported to the mechanism belong to each player. A natural way of keeping these properties in the Sybil extension game is by making the mechanisms Sybil-proof. If players have no incentives to report more than one valuation, then all mathematical definitions stated before maintain their intrinsic characteristics. Using the definition of Sybil-proofness, we have that a cake-cutting mechanism $\mathfrak A$ is Sybil-proof if holds:
\begin{equation}
    \mathbb E[\mu_i(\mathfrak A(\mu_i,\mu_{-i},\cdot)_i)]\geq \mathbb E[\sum_{j=1}^k\mu_i(\mathfrak A(\nu_{i_1},...,\nu_{i_k},\mu_{-i},\cdot)_{i_j})]\text{ for all }\mu_i,\nu_{i_1},...,\nu_{i_k}\in \mathcal L,\mu_{-i}\in\mathcal L^\infty.
\end{equation}
In the following proposition, we will see that Sybil-proofness is a strong condition. This condition implies that the mechanism waste (a large) part of the cake if other properties such as proportionality want to be guaranteed. Moreover, we upper bound the worst-case social welfare of Sybi-proof cake-cutting mechanisms.
\begin{proposition}\label{prop:cake1} No truthful Sybil-proof cake cutting is $\alpha$-proportional for $\alpha >1/2^{n-1}$. In particular, truthful Sybil-proof cake cutting are not non-wasteful. So the worst-case social welfare is upper bounded by $n/2^{n-1}$, i.e. $\underset{\mu\in\mathcal L^n}{\min}\mathbb E[\sum_{i=1}^n\mu_i(\mathfrak A(\mu,\cdot)_i)]\leq n/2^{n-1} $.
\end{proposition}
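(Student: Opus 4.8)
The plan is to reduce the heterogeneous problem to the homogeneous reward-distribution setting already analyzed in the previous section. Fix $n$, let $\mu$ be any non-atomic probability measure on $(\mathcal{C},\mathfrak{C})$, and consider the instances in which every reported identity submits exactly this $\mu$. By symmetry of the allocation rule, the expected value $v_m:=\mathbb{E}[\mu(\mathfrak{A}((\mu,\dots,\mu),\cdot)_i)]$ obtained by identity $i$ when $m$ identities all report $\mu$ does not depend on $i\in[m]$, and $v_1\le\mu(\mathcal{C})=1$ because $\mathfrak{A}$ returns a sub-slice of the cake; moreover, since the returned slices are disjoint, $m\,v_m\le 1$ for every $m$. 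Restricted to these reports the mechanism is literally a reward-distribution mechanism with reward $R=1$ and per-coalition reward $r(m):=m\,v_m\in[0,1]$, $r(1)\le 1$.

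Next I would extract the pie-shrinking inequality from Sybil-proofness. Let a player hold true valuation $\mu$, let the $m-1$ other reported identities all submit $\mu$, and let the player instead split into $k\ge 1$ Sybils each reporting $\mu$. The cake-cutting Sybil-proofness inequality then reads $v_m=\mathbb{E}[\mu(\mathfrak{A}(\mu,\mu_{-i},\cdot)_i)]\ge\sum_{j=1}^{k}\mathbb{E}[\mu(\mathfrak{A}((\mu,\dots,\mu),\mu_{-i},\cdot)_{i_j})]=k\,v_{m-1+k}$. Taking $k=2$ gives $v_m\ge 2\,v_{m+1}$ for every $m\ge 1$ — equivalently, $r$ satisfies the hypothesis of Proposition~\ref{prop:optimalproof}, so one may just quote that proposition. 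Iterating from $v_1\le 1$ yields $v_n\le v_1/2^{n-1}\le 1/2^{n-1}$.

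The three assertions now follow quickly. If $\mathfrak{A}$ is $\alpha$-proportional, feeding it the profile of $n$ copies of $\mu$ forces $\mu(\mathfrak{A}((\mu,\dots,\mu),w)_i)\ge\alpha$ for every realization $w$, hence $v_n\ge\alpha$; together with $v_n\le 1/2^{n-1}$ this gives $\alpha\le 1/2^{n-1}$, so no truthful Sybil-proof mechanism is $\alpha$-proportional for $\alpha>1/2^{n-1}$ (the same computation rules out expected $\alpha$-proportionality as well). If $\mathfrak{A}$ were non-wasteful, then $\sum_{i=1}^{n}\mu(\mathfrak{A}((\mu,\dots,\mu),w)_i)=\mu(\mathcal{C})=1$ always, so $n\,v_n=1$, i.e. $v_n=1/n$; since $2^{n-1}>n$ for $n\ge 3$ this contradicts $v_n\le 1/2^{n-1}$, so part of the cake must be wasted. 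Finally, using the all-$\mu$ profile as the witness, $\mathbb{E}[\sum_{i=1}^{n}\mu_i(\mathfrak{A}(\mu,\cdot)_i)]=n\,v_n\le n/2^{n-1}$, which bounds the worst-case social welfare as claimed.

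The only delicate point is the well-definedness of $v_m$: one must use the symmetry axiom on $\mathfrak{A}$ to conclude that identities reporting the same measure receive equal expected shares, after which the homogeneous instance is recognized as a reward-distribution mechanism and everything reduces to the recursion already carried out for Proposition~\ref{prop:optimalproof}. One should also note that the deviation "report $k$ copies of one's own valuation" is admitted by the definition of Sybil-proofness, since that definition quantifies over arbitrary report tuples $\nu_1,\dots,\nu_k\in\mathcal{L}$; truthfulness is then only needed to make the $\alpha$-proportionality and welfare conclusions about the underlying (non-manipulated) instances meaningful.
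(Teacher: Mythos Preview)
Your proposal is correct and follows essentially the same route as the paper: restrict to the homogeneous profile where every identity reports the same measure $\mu$, use symmetry to get a common expected share $v_m$ (the paper's $\beta(m)$), apply Sybil-proofness with $k$ copies to obtain $k\,v_{m-1+k}\le v_m$, specialize to $k=2$, and iterate from $v_1\le 1$. Your write-up is more explicit than the paper's (you spell out the link to Proposition~\ref{prop:optimalproof} and the non-wastefulness contradiction), but the underlying argument is identical.
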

In the previous proposition, we have seen that all mechanisms hold $\underset{\mu\in\mathcal L^\infty}{\min}\mathbb E[\sum_{i=1}^\infty\mu_i(\mathfrak A(\mu,\cdot)_i)]\leq n/2^{n-1} $. In the following, we will define a (non-constructive) mechanism that is truthful, Sybil-proof, and is $1/2^{n-1}-$proportional. In order to do so, we will use Neyman's theorem (similar to \cite{mossel2010truthful}) that establishes that there exists a partition $C_1,...,C_k$ of the cake $\mathfrak C$ such that for all players $i$ and slices $j$ it holds that $\mu_i(C_j)=1/k$. The mechanisms works as follows:

Assume that players' true valuations are $\mu_1,...,\mu_n$ and that each one declares some measure $\nu_i$. First, find a partition $C_1,...,C_n$ such that for all $i,j$ holds $\nu_i(C_j)=1/n$. Then choose a random permutation $\sigma\in S_n$ from the uniform distribution. Afterward, toss a biased coin $X$ with probability $\Pr[X=1]=n/2^{n-1}$. If $X=1$, then allocate $C_{\sigma(i)}$ to the $i$th player, otherwise allocate the empty set.

\begin{proposition}\label{prop:cake2} The mechanism Sybil-proof (non-constructive) cake-cutting mechanism that is in expectancy $1/2^{n-1}-$proportional.
\end{proposition}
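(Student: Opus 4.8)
The plan is to verify the three relevant properties — proportionality, truthfulness, and Sybil-proofness — each of which collapses to one short computation once we isolate a single observation: for a consensus partition $C_1,\dots,C_k$ into $k$ pieces and a uniformly random permutation $\sigma\in S_k$, the expected value that any fixed identity draws from its assigned piece, measured by \emph{any} probability measure $\mu$ on $(\mathcal C,\mathfrak C)$, equals $1/k$. Indeed $\mathbb E_\sigma[\mu(C_{\sigma(\ell)})]=\frac1k\sum_{j=1}^k\mu(C_j)=\frac1k\mu(\mathcal C)=\frac1k$, since $\{C_j\}_j$ partitions $\mathcal C$ and $\sigma(\ell)$ is marginally uniform on $[k]$. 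I would state this first, noting that the partition exists by Neyman's theorem applied to the \emph{reported} measures (legitimate because each report lies in $\mathcal L$, i.e. is a non-atomic probability measure) and that the coin is well defined since $k/2^{k-1}\le 1$ for every $k\ge 1$.

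\textbf{Proportionality and truthfulness.} With $n$ honest players reporting truthfully one has $\mu_i(C_j)=1/n$ for every $j$, so player $i$ receives value exactly $1/n$ precisely when $X=1$; hence $\mathbb E[\mu_i(\mathfrak A(\mu)_i)]=\frac{n}{2^{n-1}}\cdot\frac1n=\frac1{2^{n-1}}$, the claimed $\frac1{2^{n-1}}$-proportionality in expectancy. If player $i$ instead reports some $\nu_i\in\mathcal L$, the consensus partition changes but remains a partition of $\mathcal C$ into $n$ pieces, so $\sum_j\mu_i(C_j)=1$ still holds and the key observation gives $\mathbb E[\mu_i(\mathfrak A(\nu_i,\mu_{-i})_i)]=\frac{n}{2^{n-1}}\cdot\frac1n=\frac1{2^{n-1}}$ as well. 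Thus truth-telling is weakly dominant (every report yields the same expectation).

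\textbf{Sybil-proofness.} Suppose player $i$ reports $m\ge 1$ identities $i_1,\dots,i_m$ while the remaining identities number $t$, so $k:=t+m$ pieces are formed. Summing the key observation over $j=1,\dots,m$, $\mathbb E\!\left[\sum_{j=1}^m\mu_i(\mathfrak A(\cdot)_{i_j})\right]=\frac{k}{2^{k-1}}\cdot\frac mk=\frac{m}{2^{t+m-1}}$, whereas reporting a single identity ($k'=t+1$ pieces) yields expected value $\frac1{2^t}$. The Sybil-proofness inequality therefore reads $\frac1{2^t}\ge\frac{m}{2^{t+m-1}}$, i.e. $2^{m-1}\ge m$, which holds for every integer $m\ge1$ (with equality at $m\in\{1,2\}$). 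The same computation also shows the worst-case expected social welfare with $n$ honest players equals $n/2^{n-1}$, matching the upper bound of Proposition~\ref{prop:cake1}, so the mechanism is worst-case welfare optimal among truthful Sybil-proof cake-cutting mechanisms.

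The arithmetic is elementary, so there is no substantial obstacle; the two points that require care are (i) confirming that Neyman's theorem is invoked on the reported rather than the true measures, which is legitimate since reports are constrained to $\mathcal L$, and (ii) recognizing that the per-identity expected value $1/k$ does not depend on the reporter's true valuation. This independence is precisely what makes the mechanism simultaneously truthful and Sybil-proof, and it is the only place where one must avoid invoking honesty of reports — honesty enters solely to sharpen the proportionality guarantee from an in-expectation bound to the exact per-piece value $1/n$.
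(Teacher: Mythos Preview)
Your proof is correct and follows essentially the same approach as the paper: compute the per-identity expected value as $\Pr[X=1]\cdot\tfrac1k$, use its independence from the report to get truthfulness and $1/2^{n-1}$-proportionality, and then compare $\tfrac{m}{2^{t+m-1}}$ against $\tfrac{1}{2^{t}}$ for Sybil-proofness. Your linearity-of-expectation argument for the Sybil payoff is a slight streamlining of the paper's subset-counting computation, but the route is the same.
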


Putting it all together, we have proved that 
\begin{equation*}
    \max_{\mathfrak A\in\mathcal M}\min_{\mu\in\mathcal L^n}\mathbb E[\sum_{i=1}^n\mu_i(\mathfrak A(\mu,\cdot)_i)] = n/2^{n-1}
\end{equation*}
where $\mathcal M$ is the set of all truthful Sybil-proof cake-cutting mechanisms.


\textbf{Bidding ring in second price auction with an unknown number of players}.

Another generalization of reward sharing  arises in mechanisms such that players want to collude in order to reduce competition and increase their expected profits. For example, in auctions, the bidders can act collusively and engage in a collusion with a view to obtaining lower prices, see \cite{krishna2009auction}. The resulting arrangement, usually nominated by \textit{bidding ring}, is studied in \cite{mcafee1992bidding,marshall2014economics,marshall2007bidder} in the different auctions and different conditions in the properties of the cartel.

In a second-price auction, bidders submit sealed bids and the highest bidder wins the auction, but pays the price of the second-highest bid rather than their own. If there are ties, item is allocated randomly among the identities that made the highest bid. Clearly, second price auctions (and standard auctions with one item) are Sybil-proof (in case that the valuation of the players have no mass points). For example, in a first price auction, if two or highest bids coincide, then players have incentives to bid more than once in order to increase their chances to obtain the item. However, if the distribution has no mass points, no player can benefit from bidding more than once. We will discuss mechanisms where players try to collude. Bidding rings can occur in this type of auction when a group of bidders colludes to manipulate the outcome of the auction. To illustrate the usefulness of the pie-shrinking technique, we will consider the model proposed in \cite{mcafee1992bidding}. There are $n$ risk-neutral bidders, denoted by $i=1,...,n$, each player $i$ knows his own valuation $v_i$ of the item, while all other bidders and the seller perceive $v_i$ as an independent draw from a cumulative distribution $F$. The seller sells the item through a second-price auction. If two or more players have the highest bid, then the item is allocated randomly. Assume that $F$ has differentiable density $f$ with support $[0,v_h]$ and $F$ is common knowledge. Moreover, neither the ring center nor the players know the number of players $n$. In \cite{mcafee1992bidding}, the authors first introduce a mechanism with no side payments. They prove that if the function $H = (1-F)/f$ is not increasing, then all players bidding the reserve price $r$ is an optimal collusion. In this mechanism, the seller is used as a correlating device. Under uncertainty on the number of players and false-name bids (Sybils) this mechanism is clearly not Sybil-proof, since players (with $v_i> r$) can send more than one bid with reserve price and unilaterally increase their chances to win the auction. And so, we will focus on twisting the second mechanism provided by the authors. In this mechanism, the authors assume that the cartel is stronger than the one provided before. First, the cartel is able to exclude non-serious bidders. Second, the players are able to make transfers among their members. In this scenario, the authors are able to provide an incentive-compatible (truthful bidding) and efficient (bidding ring) mechanism. The mechanism works as follows. Before the auction, the cartel members report their valuations to the mechanism. If no report exceeds $r$, the cartel does not bid in the auction. If at least one player $i$ exceeds the bid $r$, the bidder making the highest report $v$ obtains the item and pays a total of 
\begin{equation}\label{eq:T}
    T(v) = (n-1)F(v)^{-n}\int^v_r(x-r)F(x)^{n-1}f(x)dx+r.
\end{equation}
Each loser bidder receives from the winner $(T(v)-r)/(n-1)$ and the seller receives $r$. Moreover, they prove that every incentive-compatible and efficient mechanism has the property that the winner transfers to each loser an amount equal to $V(n)=\mathbb E[v_{(2)}-r\mid v_{(1)}]/n$ (where $v_{j}$ is the $j$th order statistic). And so, is not Sybil-proof. The proof of this is straightforward. 
For $n V(n)\uparrow v_h$ and so, for $n$ sufficiently large, holds $2V(n+1)\geq V(n)$. And so, with a sufficient number of players, any bidder $i$ with valuation $v_i\geq r$ has incentives to at least bid twice (one truthfully and another one $r$). In particular, any efficient bidding ring(the bidder with the highest valuation wins the auction and all the surplus generated in the collusion is divided among the bidding ring members) is not Sybil-proof. 
To find permissionless bidding rings that are Sybil-proof and incentive compatible, first, we define the following family of bidding rings defined by $(T,g)$:
\begin{itemize}
    \item Registration phase: All players register an identity responsible for bidding $i=1,...k$ (potentially some identities will be Sybils).
    \item All players submit their bids $w_1,...,w_k$ to the ring center (potentially some players will submit more than one Sybil bid).
    \item The bidder with the highest bid pays $T(v;k)$ to the ring center and submits the highest value to the seller.
    \item W.l.o.g assumes that the winning bid is $i=1$. Then, the ring center pays to each loss identity $i=2,...,k$ the amount $g(k)T(v;k)$ and credible burns the remaning part.
\end{itemize}
Since we want the mechanism to be budget balance, from now on we will assume that $g\leq 1/(k-1)$. Observe that if $g(k)=1/(k-1)$, and $T$ defines as \ref{eq:T}, then we have the previous mechanism. We denote by $\mathcal B$ the set of bidding rings of this family such is incentive compatible to report the bid truthfully.
In the following proposition, we give in terms of $g$, the function $T$ in order to be incentive compatible to report the true valuation. For simplicity, from now on, we will assume that the reserve price $r$ equals zero.
\begin{proposition}\label{prop:bidding} For a given $g$, the bidding ring $(T,g)$ with:
\begin{equation}
    T(v)= F(v)^{-n-l(n)+1}\int^{v}_r(n-1)vF(u)^{n-2+l(n)}f(u)du
\end{equation}
with $l(n)=(n-1)g(n)$ is incentive compatible.
\end{proposition}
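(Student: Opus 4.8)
The plan is a standard envelope / single-crossing verification. Fix a ring member with true valuation $v$ and assume the other $n-1$ registered members report truthfully; their bids are then i.i.d.\ draws from $F$, so the largest of them, call it $Y$, has c.d.f.\ $F(\cdot)^{n-1}$, and since $F$ has no mass point ties happen with probability zero. If this member reports $b$, then with probability $F(b)^{n-1}$ she submits the highest report, wins the item (worth $v$) and pays $T(b)$ to the ring centre, while with the complementary probability some other member with valuation $Y>b$ wins and she collects $g(n)\,T(Y)$. Hence her expected payoff is
\begin{equation*}
\Pi(b;v)=F(b)^{n-1}\left(v-T(b)\right)+\int_b^{v_h} g(n)(n-1)F(y)^{n-2}f(y)\,T(y)\,dy ,
\end{equation*}
and differentiating in $b$ gives
\begin{equation*}
\frac{\partial\Pi}{\partial b}(b;v)=(n-1)F(b)^{n-2}f(b)\left(v-T(b)\right)-F(b)^{n-1}T'(b)-g(n)(n-1)F(b)^{n-2}f(b)\,T(b).
\end{equation*}

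The key step is to recognise that the $T$ in the statement is exactly the solution of the linear first-order ODE
\begin{equation*}
F(v)\,T'(v)+\left(n-1+l(n)\right)f(v)\,T(v)=(n-1)\,v\,f(v),\qquad l(n)=(n-1)g(n),
\end{equation*}
with the integration constant pinned down by requiring $T$ to stay bounded near $v=0$ (recall $r=0$). Indeed, multiplying by the integrating factor $F(v)^{\,n-1+l(n)}$ makes the left-hand side the derivative of $F(v)^{\,n-1+l(n)}T(v)$, and integrating from $r$ reproduces precisely the integral displayed in the statement (with the dummy variable of integration written as $u$). Substituting $F(b)^{n-1}T'(b)=(n-1)\,b\,f(b)\,F(b)^{n-2}-\left(n-1+l(n)\right)f(b)\,T(b)\,F(b)^{n-2}$ into $\partial\Pi/\partial b$, the three $T(b)$-terms cancel — using $\left(n-1+l(n)\right)-g(n)(n-1)=n-1$ — and the whole expression collapses to
\begin{equation*}
\frac{\partial\Pi}{\partial b}(b;v)=(n-1)F(b)^{n-2}f(b)\,(v-b).
\end{equation*}
On the support of $F$ the prefactor is nonnegative, so $\partial\Pi/\partial b$ is positive for $b<v$ and negative for $b>v$; hence $b=v$ maximises $\Pi(\cdot;v)$ (the boundary case $b\ge v_h$, where $\Pi(\cdot;v)$ is constant and no larger than $\Pi(v;v)$, being checked separately), so truthful reporting is a best response to truthful play by the others, i.e.\ $(T,g)\in\mathcal B$.

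I expect the difficulty to be modelling and bookkeeping rather than a deep idea: one must set up $\Pi$ so that the ``winner pays $T(b)$ of her own bid, each loser receives $g(k)T$'' structure is captured correctly, track the role of the reserve $r$ (here $0$) in the boundary term of the ODE, and reconcile the integrand of the displayed $T$, whose dummy of integration should be $u$ rather than $v$. Once the ODE characterising $T$ is identified, the remaining cancellation is routine; as a consistency check, taking $g(n)=1/(n-1)$ gives $l(n)=1$ and recovers the efficient transfer of \cite{mcafee1992bidding}.
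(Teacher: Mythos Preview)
Your proposal is correct and follows essentially the same route as the paper: both set up the expected payoff of a bidder reporting $b$ against truthful opponents, differentiate it, and identify the same linear first-order ODE for $T$ (with integrating factor $F(v)^{n-1+l(n)}$) whose solution is the displayed formula. The only cosmetic difference is in the sufficiency argument: the paper invokes the single-crossing condition $\partial^2\pi/\partial w\,\partial v\ge 0$ to conclude that the first-order condition $\partial\pi/\partial w|_{w=v}=0$ characterises incentive compatibility, whereas you substitute the ODE back in to obtain the explicit form $\partial\Pi/\partial b=(n-1)F(b)^{n-2}f(b)(v-b)$ and read off global optimality directly---these are two presentations of the same computation.
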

Observe that if $g=0$, then $T(x)=\mathbb E[v_{(2)}\mid v_{(1)}=x]$ and so the expected profit of the Bidding ring $(T,g)$ is exactly the expected profit without the bidding ring. Also, as we have seen, for not all $g$, the bidding ring $(T,g)$ is Sybil-proof. 
Let $\mathcal B\mathcal S\subseteq \mathcal B$ be the set of incentive-compatible bidding rings that are Sybil-proof in equilibrium (all reporting exactly one identity is an equilibrium). Observe that $\mathcal B\mathcal S$ is non-empty since for $g=0$, the corresponding bidding ring is Sybil-proof. Therefore, we can consider the optimal Sybil-proof bidding ring $g_{max}$, and the optimal social welfare OPT:
\begin{align*}
g_{max}&=\underset{g; (T,g)\in \mathcal B\mathcal S}{\text{argmax}}\mathbb E \left[\sum_{i=1}^nU_i(w_i,n_i,w_{-i},n_{-i})\mid w_i=v_i,n_i=1,\forall i\right]\\
    \text{OPT}&=\underset{g; (T,g)\in \mathcal B\mathcal S}{\text{max}}\mathbb E\left[\sum_{i=1}^nU_i(w_i,n_i,w_{-i},n_{-i})\mid w_i=v_i,n_i=1,\forall i\right]
\end{align*}
with $U_i$ being the utility function of a player.

\begin{proposition}\label{prop:optimalbr} If the valuations are non-trivial i.e. $\mathbb E[v_{(1)}]\not=0$, then $\text{OPT}>\mathbb E[v_{(1)}-v_{(2)}]$. In other words, there are profitable Sybil-proof bidding rings in second-price auctions.
\end{proposition}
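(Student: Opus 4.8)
The plan is to exploit the fact that $g\equiv 0$ lies strictly inside $\mathcal B\mathcal S$ and that perturbing $g$ slightly away from $0$ strictly increases the surplus the ring captures. Recall that for $g\equiv 0$, Proposition \ref{prop:bidding} gives $T(v)=\mathbb E[v_{(2)}\mid v_{(1)}=v]$; so in the equilibrium where everyone reports one identity and bids truthfully, the winner nets $v_{(1)}-T(v_{(1)})$ and each loser gets $0$, hence the ring's equilibrium surplus is exactly $\mathbb E[v_{(1)}-T(v_{(1)})]=\mathbb E[v_{(1)}-v_{(2)}]$. Thus it suffices to exhibit some $g\not\equiv 0$ with $(T,g)\in\mathcal B\mathcal S$ whose equilibrium surplus is strictly larger.

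First I would show that at $g\equiv 0$ the Sybil-proofness inequality is \emph{strict}. Fix a player $i$ with value $v_i$ facing $n-1$ rivals that bid truthfully, and suppose $i$ submits $m\ge 2$ identities whose bids have maximum $B$; write $k=n-1+m>n$. Because the loser share $g(k)T(\cdot;k)$ vanishes, $i$'s payoff is $\mathbb E[\mathbf 1\{B\text{ beats all rivals}\}\,(v_i-T(B;k))]$, where with $l=0$ the value $T(B;k)$ given by the formula of Proposition \ref{prop:bidding} equals the mean of the maximum of $k-1$ i.i.d.\ $F$-draws conditioned to lie below $B$, which is \emph{strictly increasing in} $k$ whenever $F(B)>0$. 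Hence this payoff is at most the payoff of a single identity bidding $B$ against $n-1$ rivals, strictly so because $\mathbb E[v_{(1)}]\neq 0$ guarantees a positive-probability event on which $B$ beats the rivals and $T(B;k)>T(B;n)$; and that payoff is in turn at most the single-identity truthful payoff since $(T,0)$ is incentive compatible. Therefore reporting exactly one identity strictly dominates every multi-identity deviation.

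Next I would upgrade this to an open neighbourhood of $g\equiv 0$. Bidding above $v_h$ is dominated, so one may restrict to bids in $[0,v_h]$; moreover the crowding in $T(\cdot;k)$ makes reporting very many Sybils strictly unprofitable, uniformly for $g$ near $0$, giving a finite bound $M$ on the number of identities worth considering. On the resulting compact set of deviations the deviator's best-response value is continuous in $g$ (through $T$ and $g$ itself) and, by the previous paragraph, lies strictly below the one-identity value at $g\equiv 0$; hence it still does for every $g$ in some neighbourhood, i.e.\ $(T,g)\in\mathcal B\mathcal S$ there. Finally, in that neighbourhood the equilibrium surplus is $\mathbb E[v_{(1)}-T(v_{(1)})(1-l(n))]$ with $l(n)=(n-1)g(n)$, and differentiating the formula for $T$ in Proposition \ref{prop:bidding} shows this is strictly increasing in $g(n)$ at $g(n)=0$; for instance when $F$ is uniform it equals $\mathbb E[v_{(1)}]\bigl(1-\tfrac{(n-1)(1-l)}{n+l}\bigr)$, whose $l$-derivative is $\mathbb E[v_{(1)}]\tfrac{n+1}{(n+l)^2}>0$. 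Taking any such $g$ with $g(n)>0$ small yields a ring in $\mathcal B\mathcal S$ with surplus strictly above $\mathbb E[v_{(1)}-v_{(2)}]$, so $\text{OPT}>\mathbb E[v_{(1)}-v_{(2)}]$.

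The step I expect to be the main obstacle is the robustness argument: converting ``strictly Sybil-proof at $g\equiv 0$'' into ``Sybil-proof on a neighbourhood'' requires controlling a continuum of deviating bid profiles and an a priori unbounded number of Sybils simultaneously, so one genuinely needs the crowding term in $T(\cdot;k)$ to rule out large-$m$ deviations before a compactness-plus-continuity argument applies; by comparison, the monotonicity of the equilibrium surplus in $g$ is a routine, if slightly delicate, computation for general $F$.
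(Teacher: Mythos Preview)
Your approach is correct and genuinely different from the paper's. The paper's (very terse) argument proceeds by \emph{constructing} a non-trivial $g$: it imposes the first-order condition that the expected profit as a function of the number of reported identities $m$ be maximized at $m=1$, treats $g$ as a smooth function of $k$, derives a differential equation from this condition, and observes that its unique solution is non-trivial, hence yields welfare strictly above the $g\equiv 0$ benchmark. In contrast, you proceed by \emph{perturbation}: you show that $g\equiv 0$ is strictly Sybil-proof, then use continuity and compactness to push this to a neighbourhood, and finally verify that the equilibrium surplus has a strictly positive derivative in $l(n)$ at $l=0$. The paper's route is more constructive (it implicitly identifies the optimal $g$), while yours is more elementary and arguably more rigorous given how much the paper leaves unsaid.

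Two remarks. First, your robustness step can be substantially simplified: rather than perturbing all of $g$ and invoking a uniform bound $M$ on the number of Sybils, just set $g(n)=\varepsilon>0$ and keep $g(k)=0$ for all $k>n$. Then any deviation to $m\ge 2$ identities lands at $k=n-1+m>n$ where $g(k)=0$, so the deviator's payoff is exactly the one you already bounded in the $g\equiv 0$ analysis; meanwhile the honest $m=1$ payoff has only gone up (the loser term is new and positive, and $T(\cdot;n)$ decreases in $l$, as you can check). This sidesteps the compactness issue you flagged as the main obstacle. Second, your claim that $\mathbb E[v_{(1)}-T(v_{(1)})(1-l)]$ is strictly increasing in $l$ at $l=0$ does hold for general $F$: writing $g_{n-1}(u)=(n-1)F(u)^{n-2}f(u)$, one gets
\[
\partial_l\big[(1-l)T(v;l)\big]\big|_{l=0}
=\frac{1}{F(v)^{n-1}}\int_0^v u\,\bigl(\ln F(u)-\ln F(v)-1\bigr)\,g_{n-1}(u)\,du,
\]
and the bracket is $\le -1$ on $[0,v]$, so the integrand is strictly negative whenever $\mathbb E[v_{(1)}]\neq 0$.
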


\section{Cost sharing mechanisms}

This section focuses on studying mechanisms proposed in \cite{dobzinski2008shapley} within the context of cost-sharing mechanism design. We specifically explore the susceptibility of these mechanisms to false-name strategies. In particular, we will prove that the mechanisms proposed in \cite{dobzinski2008shapley} are not Sybil-proof.

In a cost-sharing mechanism design problem, several participants with unknown preferences vie to receive some good or service, and each possible outcome has a known cost. Formally, we consider problems defined by a set of players $N$ and a cost function $C:2^N\rightarrow\mathbb R^+$ that describes the cost incurred by the mechanism as a function of the outcome(i.e., of the set $S$ of winners). 

This section focuses on the study of the \textit{public excludable good} problem, which involves determining whether to finance a public good and, if so, identifying who is permitted to use it. We will assume that $C(S)=1$ for every $S\not=\emptyset$ and $C(\emptyset)=0$. A mechanism will consist of an allocation rule $\textbf{x}$ and a payment rule $\textbf{p}$ that will determine which set $S$ is allocated the public good and how much each player must pay. Player $i$ has a private value $v_i$ for being included in the chosen set (having access to the public good). We assume that players have quasi-linear utilities, meaning that each player $i$ aims to maximize $u_i(S,p_i)=v_ix_i-p_i$ where $x_i=1$ if $i\in S$ and $x_i=0$ if $i\not\in S$.

Under these assumptions, in \cite{dobzinski2008shapley} they proposed three different truthful mechanisms, the VCG mechanism, the Shapley mechanism and the Hybrid mechanism. The first one is efficient (welfare maximizer) however, in general, has deficit (i.e. the users payments to not covers the costs incurred by financing the public good). The second one has no deficit, however has $0$ worst-case welfare. On the other hand, the Hybrid mechanism has no-deficit and is $\mathcal H_n$-approximate. Moreover, the authors prove that, up to a constant, this mechanism is tight.
However, as we will see, all these mechanisms are not Sybil-proof (we will show that the Hybrid mechanism is not Sybil-proof, the others follow similarly).

\begin{proposition}\label{prop:cost} The Hybrid mechanism, the Shapley mechanism and the VCG for public excludable goods are not Sybil-proof.
\end{proposition}
The last proposition opens the following question. What is the maximum $\alpha(n)$ such that there is a $\alpha(n)-$approximated truthful, no-deficit, and Sybil-proof cost-sharing mechanism?

\section{Sybil Proofness with Commitments}
In the previous section, we considered Sybil games as an extension of normal games. Now, we will explore Sybil commitments, which offer valuable insights into strategic decision-making when agents credibly delegate decisions to third parties, such as AI or reinforcement learning algorithms, for optimizing preferences in complex environments. This delegation is related to the revelation principle, which states that agents can truthfully reveal their private information to a mechanism while maximizing their utility.
Sybil commitment games are an extension of normal games with two phases. In the first phase, each player commits a finite number of Sybils that will act as individual rational agents in the next phase. In the second phase, each Sybil acts under the rationality committed in the previous round and with public knowledge of the number of players in the games. These Sybils are indistinguishable from actual players and are therefore treated as such by other Sybils.
As the Sybil extension game has the structure of a two-phase game, we can analyze its equilibrium points, which we will call Sybil commitment equilibrium. These games have natural applications in environments where the players can credibly commit to act in a certain way up to some predicates. For example, in \cite{hall2021game}, the authors propose a model for games in which the players have shared access to a blockchain that allows them to deploy smart contracts to act on their behalf. Also, in the second phase, Sybils do not need to constrain their actions based on their belief since they have a perfect signal on the number of players.
However, we will prove that in general, it is not incentive-compatible to commit a unique agent or, in other words, reveal the agents' identity. In general, when players play a game with incomplete information on the number of agents, they do not have incentives to truthfully reveal their identity. Finally, we will give a necessary condition for players to truthfully report their identity in the commitment phase.
The study of these games is crucial for designing mechanisms where agents credibly delegate their preferences and decision-making to third parties or to an AI, ultimately promoting trust, accountability, and efficiency in multi-agent settings. Understanding Sybil commitment games helps develop robust systems across various contexts, from financial markets and online platforms to distributed systems and influence campaigns.

We will define the Sybil-commitment game of players with utility functions $U_1,...,U_n:\textbf{A}^\infty\times\textbf{A}^\infty\rightarrow\mathbb R$ and $U_i(x,\sigma(y))=U_i(x,y)$ for all $y\in S_{\infty}$\footnote{In general in literature is said that the utility function is anonymous.}. For every normal game $G$, we denote by $\text{NE}(G)$ the set of Nash equilibrium of the game $G$. Then, for a map of this form, we consider the following two-stage game:

\begin{enumerate}
    \item For every player $i$ the set $\mathcal N$ and $U_{-i}$ is unknown. 
    \item In the first phase, the \textit{Sybil-commitment-stage}, each player $i$ chooses and commits a number of Sybils $n_i$ with Sybil set $\mathcal I_i$ with utility functions $\{U^1_i,...,U^j_i\}$. The set of utility functions lay in $\mathcal L\subseteq C(\mathbf{A}\times\mathbf{A}^\infty,\mathbb R)$\footnote{The set of continuous functions from $\mathbf{A}\times\mathbf{A}^\infty$ to $\mathbb R$.}. In other words, the set of preferences that the players can declare lay in $\mathcal L$. The cost of generating Sybils is modelled by a function $C:\mathbb R_+\times \mathbb R_+\rightarrow\mathbb R$ where $C(x,y)$ is the cost of a player that generates $x$ identities and the other players generate $y$ identities. 
    \item In the second phase, all players know the total number $n=\sum_{i=1}^Nn_i$ of players and the set of utility functions $\{U^j_i:\textbf{A}\times\textbf{A}^\infty\rightarrow\mathbb{R}:i\in\mathcal N,j\in \mathcal I_i\}$. In this phase, the Sybils play a normal-form game (complete information) with the number of players and utility specified.
    \item Finally, if the set of actions taken by all Sybils is $a=(a^1,...,a^N)$ the total payoff of the player $i$ is
    \begin{equation*}
         \tilde{U}_i(a^i,a^{-i}) = U_i(a^i,a^{-i})-C(|\mathcal I_i|,|\bigcup_{k\not=i}\mathcal I_k|)
    \end{equation*}
\end{enumerate}

We call this game the \textbf{Sybil commitment extension game}  of the tuple $G=(U_1,...,U_n,C)$ and denoted by $\textbf{SyC}(\mathbf{G},\mathcal L)$.

\textbf{Sybil-commitment equilibrium}: For a Sybil commitment extension game $\textbf{SyC}(\mathbf{G},\mathcal{L})$ we say that a point $(\textbf{U}=\{U^j_i:\textbf{A}\times\textbf{A}^\infty\rightarrow\mathbb{R}:i\in\mathcal N,j\in \mathcal I_i\},\textbf{x})\in (\mathcal L^\infty)^N\times\textbf{A}^\infty$ is a Sybil Nash equilibrium if:
\begin{enumerate}
    \item $\textbf{x}$ is a Nash equilibrium of the normal game $G_n=([n],\textbf{A}^n,U)$ with $\textbf{n}$.
     \item Let $\textbf{U}_i=\{U^j_i:j\in \mathcal I_i\}\subseteq\textbf{U}$. 
     For all $\textbf{U'}_i\not=\textbf{U}_i$ we have that $ \tilde{U}_i(\textbf{x})\geq \max_{\textbf{y}\in \text{NE}(\textbf{U'})}\tilde{U}_i(\textbf{y})$.
\end{enumerate}
And so, we say that a game is \textit{Sybil-commitment-proof (SPC)} if for every number of players, reporting only one identity with truthful utility reporting is a dominant strategy. In other words, providing that information to other players is incentive compatible.

\subsection{Lower bound for Sybil Commitment games}
First, we will begin by considering the requirements that must be satisfied for a Sybil game to be a Sybil-commitment-proof game with linear costs. We will then use this theorem to demonstrate that many games that appear in the literature are not SCP. This will be accomplished by computing the price of anarchy \cite{roughgarden2015intrinsic} of these games and by showing that these games fail to meet the necessary conditions for being truthful self-reporting games with linear costs, as outlined by the aforementioned theorem.
\begin{theorem}\label{theorem:main} Let $\textbf{SyC}(\mathbf{G},\mathcal L)$ be Sybil commitment extension game of $(U_1,...,U_n,C(x,y)=c\cdot x)$  with bounded utility functions $U_i$ and $U_i=U_j$ for all $i,j$ (symmetric) and $U_i(x,y)$ is super-additive in the first term. Assume that $c=\mathcal O(1/l^2)$. Then, if $\textbf{SyC}(\mathbf{G},\mathcal L)$ is SPC, then we have that the welfare in equilibrium $\textbf{W}$ of the underlying game is $\mathcal O(n/2^n)$. 
\end{theorem}

As a corollary, if the cost of creating Sybils is sufficiently small and the Welfare $\textbf{W}$ in equilibrium is $o(n/2^n)$, then the game is not SPWC. Also, taking the reward distribution mechanism, we have that this bound is tight.
In the following, we will provide a set of popular games that are not truthful self-reporting. But first, we will give two examples of truthful self-reporting games. In the following we assume that $\mathcal L$ is the set of one function, the actual payoff function of the players.
\begin{itemize}
    \item The trivial game defined by $U:\mathbb R_+^\infty\rightarrow\mathbb R_+^\infty$ with cost function $C(x)=cx$ defined as $U_i(x)=\frac{3c}{2}$ for all $i\in\mathbb N$ and $x\in \mathbb R^\infty$. This game is obviously truthful self-reporting since no action can modify the outcome and reporting two identities has  payoff $-c/2$ and reporting one identity has payoff $c/2$.
    \item The game defined by $U_i(x_i,x_{-i})=\frac{2x_i}{e^{x_i+x_{-i}}}$ and cost function $C(l,k)=\frac{1}{e^{l+k}}$. Observe that $x=1$ is strictly dominant strategy. Since $x=1$ is the unique critical point of $U$ and $U$ is increasing in $0\leq x\leq1$. If there are $k$ reported identities, then $\tilde{U}(l,k)=\frac{l}{e^{l+k}}$ and this is maximized with $l=1$. Therefore, the game is truthful self-reporting.
\end{itemize}
\subsection{Non-SYWC examples}
Finally, we will see that not all Sybil-proof games are Sybil-commitment proof games. Consider a homogeneous oligopoly with $n$ firms, where the Cournot equilibrium is regular and unique. The inverse demand function is given by $P(x^T\cdot \textbf{1})$. Each firm (player) chooses an output $x_i$ and its costs are given by $C_i(x_i)$. For an output vector $x=(x_1,...,x_n)$, the profit to each firm is:
\begin{align*}
    U_i(x_i,x_{-i})= x_iP(x^T\cdot \textbf{1})-C_i(x_i)
\end{align*}
Under sufficient good conditions \cite{dastidar2000unique}, this game has a unique locally stable Nash equilibrium. Moreover, if the firms have the same cost function $C$, this equilibrium is symmetric. Now, let's consider the Cournot oligopoly game with $p(x)=\alpha-x$ and $C(x)=cx$ for some $\alpha,c\in\mathbb R_{\geq0}$ such that $c<\alpha$. So in this game, we have that 
\begin{equation*}
    U_i(x_i,x_{-i})=x_i(\beta-x^T\cdot \textbf{1})
\end{equation*}
where $\beta = \alpha - c$. In this scenario, the unique equilibrium point is given by $q^\star(n)=\frac{\beta}{n+1}$. The payoff of a player $U_i(q^\star)=\frac{\beta^2}{(n+1)^2}$. And so, the welfare is $\Theta(1/n)$. Now, let's take the Sybil Commitment extension of this game. We know by theorem \ref{theorem:main} that if $c=\mathcal O(1/n^2)$, game is not Sybil-commitment-proof with $\mathcal L=\{U_1\}$ (observe all $U_i$ are the same up to reorder of $x_i,x_{-i}$). 
More specifically, since the utility of a firm in equilibrium is $ U_i(q^\star)=\frac{\beta^2}{(n+1)^2}$, if the total number of players is $x_{-i}$, the $i$th player the best response strategy is to generate $x$ sybils such that 
\begin{equation*}
\begin{aligned}
& \underset{x}{\text{maximize }}x\frac{\beta^2}{(x+x_{-i}+1)^2} -cx\\
& \text{subject to } x\in\mathbb Z_{\geq0}.
\end{aligned}
\end{equation*}
For example, if there are $n=10$ players that commit in the Sybil-commitment phase, $\beta=10$ and $c=0.001$, then the best-response of a player is $x=11$.

 More general games such as concave pro-rata games with differential function $f$. We know that these games are Sybil-proof, and  by \cite{johnson2022concave} that concave pro-rata games have welfare in equilibrium $\Theta(1/n)$. Therefore, if $c=\mathcal O(1/n^2)$, we have that this type of game are not SCP.

As we have shown previously, any concave pro-rata game is not Sybil resistant. In the following, we plot the strategy of committing more than one player to the aggregated arbitrage game in decentralized Batch exchanges. We explore through simulation the payoff of generating one Sybil in the aggregated arbitrage game provided in \cite{johnson2022concave}. The aggregated arbitrage game consists of game played by arbitrageurs that can exploit the price difference between a Constant function market maker $C$ with two assets $A$ and $B$ (for more details, see \cite{angeris2022constant}) and an off-chain market maker essentially risk-free. Moreover, the trades in the constant function market maker are batched before they are executed. Specifically, the trades are aggregated in some way (depending on the type of batching performed) and then traded ‘all together’ through the CFMM, before being disaggregated and passed back to the users. In this scenario, the authors assume that the forward function $g$ is differentiable (if a player offers $\Delta$ assets to the CFMM obtains $g(\Delta)$ of $B$ assets). 
Also, they assume that the price of the external market maker is $c$. Therefore, the profit of an arbitrageur trading $t$ assets is $g(t)/c-t$. And so, the optimal arbitrage problem consists of maximizing $g(t)-ct$ constraint to $t\geq0$.

When the trades of the players are batches, arbitrageurs cannot directly trade with the CFMM, but must instead go through the batching process. Assuming that there are $n$ arbitrageurs, the pro-rata game induced is 
\begin{equation*}
    U_i(x_i,x_{-i}) =\frac{x_i}{x_i+x_{-i}}g(x_i+x_{-i})-cx_i
\end{equation*}

As shown in \cite{johnson2022concave} this game is a pro-rata concave game and has a unique Nash equilibria. Fixing the CFMM to be a constant product market maker \cite{angeris2022constant} and fixing the reserves, we can compute the payoff of each arbitrageur in equilibrium. In the case that a player can commit more than one identity, we show that this player has incentives to, at least, commit one more identity if the payoff of each player is the payoff in equilibrium with $n+1$ players, and the payoff of the Sybil attacker is twice that amount, see figure in appendix \ref{fig:arb}\footnote{Code available in \url{https://github.com/BrunoMazorra/CostsOfSybils}}.

An interpretation of this is that if a player can credibly convince the other players that there are two independent players (one himself, and the other a Sybil identity), then he has incentives to do so even if he has to commit to acting as individual rational agents. Clearly, this will decrease the social welfare, since the social welfare is $\mathcal O(1/n)$.



\appendix
\section{Appendix}\label{appendix}

\subsection{Figure}

\begin{figure}[!h]
    \centering
    \includegraphics[scale=0.6]{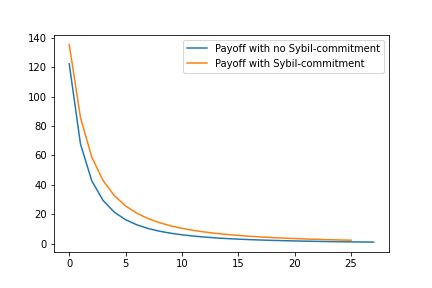}
    \caption{Aggregated arbitrage game with Sybil commitment strategy}
    \label{fig:arb}
\end{figure}

\subsection{Introduction to Topological monoids}

A \textbf{monoid} is a set $M$ together with a binary operation $\cdot$ that satisfies the following axioms:
\begin{enumerate}
    \item \textit{Associativity}: For all $a, b, c \in M$, $(a \cdot b) \cdot c = a \cdot (b \cdot c)$.
    \item \textit{Identity element}: There exists an element $e \in M$ such that $a \cdot e = e \cdot a = a$ for all $a \in M$.
\end{enumerate}
A monoid is commutative if  for every pair of elements $a,b\in G$ it holds $a\cdot b= b\cdot a$.

A \textbf{topological space} is a pair $(X, \mathcal{T})$, where $X$ is a set and $\mathcal{T}$ is a collection of subsets of $X$, called \textit{open sets}, satisfying the following axioms:
\begin{enumerate}
    \item The empty set $\emptyset$ and the whole set $X$ are in $\mathcal{T}$.
    \item The intersection of any finite number of sets in $\mathcal{T}$ is also in $\mathcal{T}$.
    \item The union of any collection of sets in $\mathcal{T}$ is also in $\mathcal{T}$.
\end{enumerate}

A \textbf{topological monoid} is a monoid $(M, \cdot)$ equipped with a topology $\mathcal{T}$ such that the binary operation $\cdot: M \times M \rightarrow M$ are continuous with respect to the topology $\mathcal{T}$. Examples of topological monoids are $(\mathbb R_{>0},1)$ and $(\mathbb R,0)$.
\subsection{Introduction to direct sums}

Direct sums are a way to combine mathematical structures, such as groups, vector spaces, and modules, while preserving their individual properties. In this section, we provide a brief introduction to the concept of direct sums.

Given two groups $(G_1, \cdot_1)$ and $(G_2, \cdot_2)$, their \textbf{direct sum}, denoted by $(G_1 \oplus G_2, \cdot)$, is a group with the underlying set $G_1 \times G_2$ and the binary operation $\cdot$ defined as:
\[(g_1, g_2) \cdot (h_1, h_2) = (g_1 \cdot_1 h_1, g_2 \cdot_2 h_2),\]
for all $g_1, h_1 \in G_1$ and $g_2, h_2 \in G_2$.

Similarly, for two vector spaces $V_1$ and $V_2$ over the same field $F$, their \textbf{direct sum}, denoted by $V_1 \oplus V_2$, is a vector space with the underlying set $V_1 \times V_2$ and the operations of vector addition and scalar multiplication defined component-wise.

The direct sum can be generalized to an arbitrary number of structures. In our case for a family of monoid $\{A_i\}_{i \in I}$, their \textbf{direct sum} is
\begin{equation*}
   \bigoplus_{i \in I} A_i = \{(v_i)_i\in\prod_{i=1}^\infty A_i:v_i=0\text{ for all but finitely many } i\}, 
\end{equation*}
is a vector space with component-wise vector addition and scalar multiplication.

\subsection{Proofs}
\textbf{Proof \ref{prop:game1}}: First, assume that players can take actions in $x\in\mathbb R_+$. Then, the vector
\begin{equation*}
    (h_1,...,h_n)=\frac{n-1}{n^2}R\cdot \textbf{1}^T.
\end{equation*}
 is a Nash equilibrium.  With vector of payoffs $(U_1,...,U_n)=\frac{R}{n^2}\cdot\textbf{1}^T$. M

Assume that other players are consuming $y$ computation resources, since the player $i$ is individually rational, solves 
\begin{equation*}
\begin{aligned}
& \underset{x}{\text{maximize }} U_i(x,y)\\
& \text{subject to } x\geq0
\end{aligned}
\end{equation*}
Clearly $U$ is twice differentiable, and we have that 
\begin{align}
    \frac{\partial U(x,y)}{\partial x} &= \frac{y}{(x+y)^2}R-1\label{equatio_max}\\
    \frac{\partial^2 U(x,y)}{\partial x^2} &= -\frac{y}{(x+y)^3}R<0 \label{equation_concave}.
\end{align}
By \ref{equation_concave} we have that $U$ is concave, and therefore the local maximum are global. By \ref{equatio_max}, we have that the maximum is realized in $x^\star(y)=\sqrt{Ry}-y$. Now, assume that $(x_1,...,x_n)$ is a Nash equilibrium. So, we have that $x_i=\sqrt{Rx_{-i}}-x_{-i}$ for all $i$. Then $x^2_T=Rx_{-i}$ for all $i$ and therefore we have that $x_i=x_j$ for all $i,j$. Implying we have a unique Nash equilibrium. Now, let's compute it. Let $x:=x_1$, we have that $n^2x_1^2=R(n-1)x_1$, so $x_1=\frac{R(n-1)}{n^2}$. The payoff of this equilibrium is $\frac{R}{n^2}$ and so the social welfare in equilibrium is $R/n$. Now, there is a $p$ such that if all players play $\pi$, then $U_i( \lfloor \frac{R/c(n-1)}{n^2}\rfloor,\pi_{-i}) =U_i( \lfloor \frac{R/c(n-1)}{n^2}\rfloor+1,\pi_{-i})$. And so, we deduce that with this $p$, $(\pi,...,\pi)$ is a Nash equilibrium.

\textbf{Proof \ref{prop:optimalproof}}: First, if a mechanism is Sybil-proof for all $c\geq0$ is in particular Sybil-proof for $c=0$. As mentioned previously, we have that for all $y\geq1$, $r(1+y)/(1+y)\geq 2r(2+y)/(2+y)$. And so, we have that
\begin{equation*}
    r(1+n)\leq \frac{r(n)}{2}\frac{n+1}{n},\text{ for }n\geq 1.
\end{equation*}
And so, recursively we deduce that 
\begin{equation*}
 r(1+n)\leq \frac{r(1)}{2^{n}}\prod_{k=1}^{n}\frac{k+1}{k} = \frac{r(1)}{2^{n}}(n+1)
\end{equation*}
Since $r(1)\leq R$, we deduce that $r(n)\leq \frac{n}{2^{n-1}}R=r_{max}$.

\textbf{Proof \ref{lemma}}: Since the mechanism is Sybil-proof, no player have incentives to generate more than one identity. By symmetry and strategy-proofness, we have that exists $s$ such that for all $i$ and $y$, it holds $s = \text{argmax}_{x\in \textbf{A}}U_i(x,y)$. And so, in equilibrium all players play $s$ and the payoff in equilibrium is $U_i(\pi,\pi_{-i})$ with $\pi$ being the $n-$tuple $(s,...,s)$. If we define $r(x)=U_i(\pi,\pi_{-i})$, it holds that the payoff of a player reporting $k$ identities can be reformulated as a discrete pro-rata mechanism.

\textbf{Proof \ref{prop:nearoptimal}}: Consider the point $p_\varepsilon = (K-\varepsilon,1)$. Now, consider the function 
\begin{equation*}
    f_\varepsilon(x) = \begin{cases}Rx/(K-\varepsilon),\text{ if }x\leq K-\varepsilon,\\
                                    -Rx/\varepsilon+RK/\varepsilon,\text{ otherwise}
                        \end{cases}
\end{equation*}
for $\varepsilon>0$. This function is clearly concave and so has a unique Nash equilibrium \cite{johnson2022concave}. Since this function is picewise differential and concave, we can approximate this function by strictly concave functions $\{g_k\}$ such that $g_k\rightarrow f$ and $g_k'(x)\rightarrow f'(x)$ for all $x\not=p$. Since the associated Nash equilibrium is solution of the optimization problem $\max q^nf(q)$, we have that $q_k = \text{argmax} q^ng_k(q)$ converges to $q$ the solution of the optimization problem that makes the unique Nash equilibrium. On the other hand, since $g_k$ is is strictly convex, we have that it holds
\begin{equation*}
    -q_kg_k'(q_k)/(n-1) = g_k(q_k)
\end{equation*}
So, making $k\rightarrow+\infty$, we have that $f(q) = -qf'(q)/(n-1)$. Computing $q$, we deduce that $f(q)\rightarrow R$ when $\varepsilon\rightarrow0$. Since $f(q)$ is the welfare in equilibrium, we have proved the proposition.

\textbf{Proof \ref{prop:cake2}} Observe that the expected size of the slice of player $i$ is 
\begin{equation*}
    \Pr[X=1]\left (\sum_j\mu_i(C_j)\Pr[\sigma(i)=j]\right ) = \frac{n}{2^{n-1}}\sum_j\mu_i(C_j)/k=\frac{n}{2^{n-1}}\mu_i(\cup_j C_j)=\frac{1}{2^{n-1}}.
\end{equation*}
 Since this quantity is independent of $\nu_i$ then player $i$ has no incentive to declare $\nu_i\not=\mu_i$. And so, the mechanism is truthful and expectancy $\frac{1}{2^{n-1}}$-proportional. Now, if the player declares $k$ identities and other players report $y$. We write $n=y+k$, then the expected payoff is:
\begin{align*}
    \sum_{\mathcal I\subseteq [n]:|\mathcal I|=k}\left (\sum_{j\in\mathcal I}\mu(C_j)\Pr[X=1]\right)\Pr[\sigma([k])=\mathcal I]&=\Pr[X=1]\Pr[\sigma([k])=[k])\sum_{j\in[n]}\sum_{\mathcal I\subseteq [n]:|\mathcal I|=k,j\in\mathcal I}\mu_i(C_j)\\
    &=\frac{n}{2^{n-1}}\frac{y!k!}{n!}\sum_{j\in[n]}\mu_i(C_j){n-1 \choose k-1}\\
    &=\frac{k}{2^{y+k-1}}
\end{align*}
Since the function $f(x) = \frac{x}{2^{y+x-1}}$ restricted to $\mathbb Z_{\geq0}$ is maximized in $x=1$, we have that the mechanism is Sybil-proof.

\textbf{Proof \ref{prop:bidding}}: First, we define $T(v)$ as the amount being payed by the player with highest valuation. And we define $g:\mathbb N\rightarrow \mathbb [0,1]$ to be the fraction of $T(v)$ being paid to a reported player, that is losing player will obtain will get paid $g(n)(T(v)-r)$. Since the protocol is budget balance, we have that $g(n)\leq 1/(n-1)$. Now, let $\pi(w,v,m)$ be the expected payoff of a player with valuation $v$, report $w$ and reporting $m$ identities to the protocol. From now on we will denote by $n$ the number of actual players (not the reported ones) that are members of the cartel. In this scenario, it holds:
\begin{align*}
    \pi(w,v,1) &= (v-T(w))F(w)^{n-1}+\\&(1-F(w)^{n-1})\int^{v_h}_w g(m+n-1)(T(u)-r)\frac{(n-1)F(u)^{n-2}f(u)}{1-F(w)^{n-1}}du\\
    &=(v-T(w))F(w)^{n-1}+\int^{v_h}_wg(n+m-1)(n-1)(T(u)-r)F(u)^{n-2}f(u)du
\end{align*}
And so, if we denote by $l(n)=(n-1)g(n)$, we have
\begin{align*}
    \frac{\partial \pi}{\partial w} &=(n-1)(v-T(w))F(w)^{n-2}f(w)-T'(w)F(w)^{n-1}-l(n)F(w)^{n-2}f(w)(T(w)-r)\\
    &=\left ( (n-1)v - (n-1+l(n))T(w)+l(n)r \right)F(w)^{n-2}f(w)-T'(w)F(w)^{n-1}
\end{align*}
Since $\partial^2 \pi/\partial w\partial v\geq0$, incentive compatibility is characterized by 
\begin{equation*}
    \frac{\partial \pi}{\partial w}\mid_{w=v}=0
\end{equation*}
This induces the following differential equation:
\begin{equation}
    ((n-1)v)F(v)^{n-2}f(v) =((n-1)+l(n))T(v)f(v)F(v)^{n-2}+T'(v)F(v)^{n-1}
\end{equation}
Multiplying by $F(v)^{l(n)}$ we obtain
\begin{equation}
    ((n-1)v)F(v)^{n-2+l(n)}f(v) =(n-1+l(n))T(v)f(v)F(v)^{n-2+l(n)}+T'(v)F(v)^{n+l(n)-1}
\end{equation}
Integrating in both sides, we obtain
\begin{align}
    \int^{v}_r((n-1)v)F(u)^{n-2+l(n)}f(u)du &= T(v)F(v)^{n+l(n)-1}\\
    F(v)^{-n-l(n)+1}\int^{v}_r((n-1)v)F(u)^{n-2+l(n)}f(u)du &= T(v) \square
\end{align}

\textbf{Proof }\ref{prop:cake1}: Assume all players have the same valuation $\mu$. Also, we assume that there is a Sybil-resistant cake-cutting algorithm that is $\alpha$-proportional. Then, if players report  $\mu$ each player obtains some slice with exactly the same valuation $\beta(n) \geq \alpha$ (by symmetry). Since is Sybil-proof, it holds $\beta(n-1+k)k\leq \beta(n)$ for all $k\geq1$. In particular, $2\beta(n+1)\leq \beta(n)$. And so, we deduce that $\beta(n)\leq 1/2^{n-1}$, then $\alpha\leq 1/2^{n-1}$. The worst case welfare is in particular smaller than $n\beta(n)\leq  n/2^{n-1}$. $\square$

\textbf{Proof }\ref{prop:optimalbr} This proposition can be deduced by imposing that the expected profits has maximum profits with $m=1$. If we construct $g$ to be twice differential, this induces a differential equation with a unique solution. Since $g$ is non-trivial, the optimal strategy has strictly larger welfare than the one with $g=0$ (the second price auction).$\square$

\textbf{Cost sharing mechanism}
\begin{mybox2}{Hybrid mechanism}
\begin{enumerate}
    \item Accept a bid $b_i$ from each player $i$.
    \item Let 
    \begin{equation*}
        S^\star\in\text{argmax}_{S\subseteq N}\left\{\sum_{i\in N}b_i-C(S)\right\}
    \end{equation*}
    denote a welfare-maximizing outcome.
    \item Initialize $S=S^\star$.
    \item If $b_i\geq C(S^\star)/|S|$ for every $i\in S$, then halt with winners $S$.
    \item Let $i^\star$ be a player with $b_i<C(S^\star)/|S|$.
    \item Set $S\leftarrow S\setminus\{i\}$ and return to Step 4.
    \item Charge each winner $i\in S$ a payment equal to the minimum bid at which $i$ would continue to win.
\end{enumerate}
\end{mybox2}
\textbf{Proof }\ref{prop:cost} Assume that $C(S)=1$ and $v_1=1+\varepsilon$ and $v_2=v_3=1/3-\varepsilon$. Then the outcome of the mechanism is $S=\{1\}$ with $p_1=1$. On the other hand, the player $1$ splits its bid in two $b_1=1/4$ and $b_1=1/4$, the outcome is $S=\{1,2,3,4\}$ with payment $p\leq 1/2$.

\textbf{Proof }\ref{theorem:main} Let $R=\max_{\textbf{x}\in\mathcal A^\infty}W(\textbf{x})$. Since is Sybil-proof, we have that $R=\max_{\textbf{x}\in\mathcal A^1}W(\textbf{x})$.  By definition of the Sybil extension, we have that 
\begin{equation*}
     U_i(\textbf{x}_i,\textbf{x}_{-i})-c\geq U_i(\textbf{y}_i,\textbf{y}_{-i})-c+ U_j(\textbf{y}_j,\textbf{y}_{-j})-c
\end{equation*}
for all $i\in [n]$, $j\in[n+1]$, $\textbf{x}\in \text{NE}(n)$ and $\textbf{y}\in\text{NE}(n+1)$. By symmetry of $U$, we can assume that $U_{n+1}(\textbf{y}_{n},\textbf{y}_{-(n+1)})\leq W(\textbf{y})/(n+1)$. Fixing $i$, we have that
\begin{align*}
nU_i(\textbf{x}_i,\textbf{x}_{-i})&\geq nU_i(\textbf{y}_i,\textbf{y}_{-i})+\sum_{j\not=i}^{n+1}U_j(\textbf{y}_j,\textbf{y}_{-j})-cn\\
& \geq (n-1)U_i(\textbf{y}_i,\textbf{y}_{-i})+W(\textbf{y})-cn
\end{align*}
and so, adding all $i$, we have that
\begin{align*}
    n W(\textbf{x})&\geq \sum_{i=1}^n[(n-1)U_i(\textbf{y}_i,\textbf{y}_{-i})+W(\textbf{y})]-cn^2\\
    &= (n-1)\left(W(\textbf{y})-U_{n+1}(\textbf{y}_{n},\textbf{y}_{-(n+1)})\right)+nW(\textbf{y})-cn^2\\
    &\geq (n-1)W(\textbf{y})(1-\frac{1}{n+1})+nW(\textbf{y})-cn^2
\end{align*}
and so
\begin{align*}
    W(\textbf{x})&\geq W(\textbf{y})\left ( 1-\frac{1}{n}\right)\left ( 1-\frac{1}{n+1}\right)+W(\textbf{y})-cn\\
    &=2\left ( 1-\frac{1}{n+1}\right)W(\textbf{y})-cn
\end{align*}
By induction, we have that for all equilibrium $\textbf{x}\in NE(n)$,
\begin{align*}
W(\textbf{x})&= \frac{1}{2^{n-2}\prod_{k=2}^{n-1}\left (1-\frac{1}{k}\right)}R+\sum_{k=1}^{n-2}\frac{c(n-k)}{2^k\prod_{l=1}^k\left(1-1/(n-l)\right)}\\
&\leq \frac{1}{2^{n-2}\prod_{k=2}^{n-1}\left (1-\frac{1}{k}\right)}R+\frac{c}{2}n^2=\frac{n-1}{2^{n-2}}R+\frac{c}{2}n^2\sim \frac{n}{2^n}(R/4)\text{ as }n\rightarrow +\infty
\end{align*}
using that $\prod_{k=2}^{n-1}\left (1-\frac{1}{k}\right)=\frac{1}{n-1}$ and $c=\mathcal O(1/l^2)$.

\textbf{Prior-optimal reward distribution mechanism}: Now, let's assume that the distribution $\mathcal D$ is common knowledge. The mechanism designer objective is to find a symmetric mechanism that maximizes the expected welfare.

Let $\mathcal D'$ be the distribution of $\mathcal D$ conditioned to $\mathcal D\geq 1$. We write $p_n = \Pr [\mathcal D'=n]$. Our objective is to find a Sybil proof symmetric mechanism that maximizes the expected social welfare. We want to split some total "reward" $R$ among $n$ players. We can model the distribution with a map $v:\mathbb N\rightarrow \mathbb R$ such that if there are a total number of $n$ players reported, each player receives $v(n)$ (or with abuse of notation $v_n$ by embedding the maps of this form in $\mathbb R^\infty$). Therefore, if there are $n$ identities reported and a player reports $k$ identities, he obtains $k\cdot v_{k+n}$. Our objective is to find $v$ such that:
\begin{enumerate}
    \item (Weak Sybil-proof) All players reporting one identity is an equilibrium (Weaker  than Syibil-proof), formally
    \begin{equation}
    \mathbb E_{n\sim\mathcal D'}[v_{n}]\geq \mathbb E_{n\sim\mathcal D'}[y\cdot v_{y-1+n}]\text{ for all }y\geq 1.
    \end{equation}
    \item (Budget Balance) For every reported number of identities $n$, it holds $n\cdot v_n\leq R$.
    \item (Efficiency) $v$ maximizes the expected welfare of all players in the Sybil-proof equilibrium. That is 
\begin{equation}
v^* = \text{argmax}_{v}\quad\mathbb E_{n\sim\mathcal D'}[n\cdot v_{n}]
\end{equation}
\end{enumerate}

And so, we can rewrite this problem as the optimization problem

\begin{equation*}
\begin{aligned}
& \underset{v}{\text{maximize }}\sum_{n=0}^{+\infty}n\cdot v_n\cdot p_n\\
& \text{subject to } 0\leq v_n\leq R/n\text{ for all }n\geq1\\
& \quad\quad\quad\quad\sum_{n=0}^\infty (v_n-v_{y-1+n}\cdot y)p_{n}\geq0\text{ for all }y\geq2
\end{aligned}
\end{equation*}
If we denote  the welfare by $W_n$ with $n$ reported players, we have that $W_n =n v_n$ and so the optimization problem can be rewritten as:

\begin{equation*}
\begin{aligned}
& \underset{W}{\text{maximize }}\sum_{n=0}^{+\infty}W_n\cdot p_n\\
& \text{subject to } 0\leq W_n\leq 1\text{ for all }n\geq1\\
& \quad\quad\quad\quad\sum_{n=0}^\infty \left(W_n/n-W_{y+n-1}y/(y+n-1)\right)p_{n}\geq0\text{ for all }y\geq2
\end{aligned}
\end{equation*}

\end{document}